\def\maxwidth{ %
  \ifdim\Gin@nat@width>\linewidth
    \linewidth
  \else
    \Gin@nat@width
  \fi
}
\definecolor{fgcolor}{rgb}{0.345, 0.345, 0.345}
\definecolor{shadecolor}{rgb}{.97, .97, .97}
\definecolor{messagecolor}{rgb}{0, 0, 0}
\definecolor{warningcolor}{rgb}{1, 0, 1}
\definecolor{errorcolor}{rgb}{1, 0, 0}
\theoremstyle{plain}
\newtheorem{assumption}{\protect\assumptionname}
\theoremstyle{plain}
\newtheorem{thm}{\protect\theoremname}
  \theoremstyle{remark}
  \newtheorem{rem}{\protect\remarkname}
  \theoremstyle{plain}
  \newtheorem{lem}{\protect\lemmaname}
  \theoremstyle{plain}
    \theoremstyle{plain}
\providecommand{\customgenericname}{}
\newcommand{\newcustomtheorem}[2]{%
  \newenvironment{#1}[1]
  {%
   \renewcommand\customgenericname{#2}%
   \renewcommand\theinnercustomgeneric{##1}%
   \innercustomgeneric
  }
  {\endinnercustomgeneric}
}
\DeclareMathOperator{\sign}{sign}
\newcommand\independent{\protect\mathpalette{\protect\independenT}{\perp}}
\def\independenT#1#2{\mathrel{\rlap{$#1#2$}\mkern2mu{#1#2}}}
  \providecommand{\lemmaname}{Lemma}
  \providecommand{\remarkname}{Remark}
\providecommand{\corollaryname}{Corollary}
\providecommand{\theoremname}{Theorem}
\providecommand{\examplename}{Example}
\providecommand{\assumptionname}{Assumption}
\newcommand{\etal}{\textit{et al}.\enspace}
\begin{document}

\begin{titlepage}
\title{Robust Principal Component Analysis with Non-Sparse Errors
}
\author{Jushan Bai\thanks{Department of Economics, Columbia University; jushan.bai@columbia.edu.
}\and Junlong Feng\thanks{Department of Economics, Columbia University; junlong.feng@columbia.edu.}}
\date{\today}
\maketitle
\begin{abstract}
\noindent  We show that when a high-dimensional data matrix is the sum of a low-rank matrix and a random error matrix with independent entries, the low-rank component can be consistently estimated by solving a convex minimization problem.
We develop a new theoretical argument to establish consistency without assuming sparsity or the existence of any moments of the error matrix, so that fat-tailed continuous random errors such as Cauchy are allowed. The results are  illustrated by simulations.\\
\vspace{0in}\\
\noindent\textbf{Keywords:} Principal Component Pursuit, high-dimensional data, nuclear norm, robust estimation, low-rank matrix, incoherence.\\
\vspace{0in}\\

\bigskip
\end{abstract}
\setcounter{page}{0}
\thispagestyle{empty}
\end{titlepage}

\section{Introduction}

A low-rank component in high dimensional data sets is often the object of interest. In asset return analysis, for example, a low-rank matrix represents systematic risks (Ross, 1976). In psychology, the main personality traits form a low-rank matrix (e.g., Cattell, 1978 and Goldberg, 1990). In background/foreground separation, by stacking the pixels of each frame of a video in a column vector, the static background is a rank one component in the resulting matrix because it stays unchanged across frames, see Bouwmans \etal(2017) for a survey. In gene expression prediction, the gene expression values may form a low-rank matrix because genes act in groups and at the expression levels they are interdependent (Kapur \etal, 2016).

To fix ideas, in this paper, we assume the  data matrix, $Y$, is of $N\times T$ dimension and  consists of a low-rank component $L_{0}$, namely
\begin{equation}\label{eq1}
Y=L_{0}+Z_{0}
\end{equation}
where $rank(L_{0})=r$ is small but unknown; $L_{0}$ can be random or deterministic. The magnitude of the its elements are allowed to diverge to infinity with $N$ and $T$ grow; $Z_{0}$ is a random error matrix with median zero entries that have positive densities around $0$.

We propose to estimate $L_{0}$ using a variant of the Principal Component Pursuit (PCP), introduced and studied by Cand\`es \etal(2011), Chandrasekaran \etal (2011) \textit{etc}. We show the estimator is consistent for $L_{0}$ in the Frobenius norm under certain conditions. This is the first time that consistency is established with continuous and potentially fat-tailed random errors. Formally, the estimator $\hat{L}$ is as follows
\begin{align}
\hat{L}=\arg\min_{L}\  \ ||L||_{*}+&\lambda	||Y-L||_{1},~~~~~\ s.t.\ ||L||_{\infty}\leq \alpha \label{eq2}
\end{align}
where $||\cdot||_{*}$ denotes the nuclear norm. $||\cdot||_{1}$ and $||\cdot||_{\infty}$ are the $\ell_{1}$ norm and the $\ell_{\infty}$ norm of a matrix seen as a long vector. Both $\lambda$ and $\alpha$ are $N,T$ dependent. In particular, $\alpha$ can grow with $N$ and $T$.  We call $\hat{L}$ the \textit{Bounded Principal Component Pursuit} (BPCP) as compared to the standard PCP, it has an additional constraint bounding the max entry of the estimator by $\alpha$.

As a preview of how the estimator works, first note that the nuclear norm is the convex envelope of the rank of a matrix because by definition, the nuclear norm is the $\ell_{1}$ norm of the vector of singular values while the rank is its $\ell_{0}$ norm, that is, the number of nonzero elements in the vector. Thus, minimizing the nuclear norm is a convex surrogate of rank minimization.

The other term in the objective function is the $\ell_{1}$ penalty for the residuals to induce robustness. The intuition is analogous to the LAD (least absolute deviation) estimator for linear regression; it is well-known that minimizing the sum of absolute deviations is robust to fat-tailed errors. It turns out for the BPCP estimator to work, the errors essentially only need to have zero median, just like the standard LAD  estimator.

Finally, the constraint in \eqref{eq2} is needed for technical reasons. We allow $\alpha$ to be $N,T$ dependent and can go to infinity as $N$ and $T$ increase. We will be precise about the allowed rate of such divergence. As it turns out, the restriction is actually very mild because in many stochastic models of $L_{0}$, $||L_{0}||_{\infty}$ diverges slower than the rate allowed with high probability. Therefore, imposing the constraint is without loss of generality in these models.

This paper adds to the theory of PCP and some of its variants, developed in Cand\`es \etal(2011), Chandrasekaran \etal (2011), Ganesh \etal(2010), Zhou \etal(2010), \textit{etc}.  In Cand\`es \etal(2011), it is assumed that $Z_{0}$ follows a Bernoulli model, i.e., each element in $Z_{0}$ is equal to $0$ with probability $1-\rho$. They show that when $1-\rho$ is large enough, PCP can \textit{exactly} recover both $L_{0}$ and $Z_{0}$ with high probability. The proof hinges on the existence of a matrix called \textit{dual certificate}, which relies on the sparsity of $Z_{0}$.  Ganesh \etal(2010) generalize the result by allowing for an arbitrarily small but fixed $1-\rho$ and show a dual certificate still exists. However, their results are not applicable to a continuously distributed error matrix because then for any entry $Z_{0,it}$, $P(Z_{0,it}=0)=0$ by definition, and thus $1-\rho=0$. Zhou \etal(2010) study a variant of PCP called the \textit{Stable Principal Component Pursuit} (SPCP). In their model, $Y$ is known to consist of a low rank, a sparse, and a dense component. They minimize a similar objective function over both $L$ and $Z$ with a constraint bounding the difference between the sum of them and $Y$ in the Frobenius norm. They show that the Frobenius norm of the estimation error is bounded by the Frobenius norm of the dense component multiplied by $\max\{N,T\}$. This bound is evidently too large for consistency. Hsu \etal(2011) change the objective function by adding a squared Frobenius norm penalty for the difference between $Y$ and the sum of the low-rank and the sparse components. They prove nuclear norm consistency for the low-rank matrix provided that the sparse component has an increasing fraction of zeros and the dense errors are Gaussian (see the second example in their section D). All the existing work in the above literature require $Z_{0}$ to have  a certain fraction of entries to be $0$ with positive probability. In contrast, 
in this paper all entries in $Z_{0}$ can be nonzero almost surely. Agarwal \etal(2012) study a broad class of models allowing $Y$ to be determined by a general operator of $L_{0}$ and $Z_{0}$, which are not necessarily to be exactly low-rank or sparse. Similar to Hsu \etal(2011), their objective function also has the additional Frobenius norm penalty as they allow for the existence of an additional noise component whose operator norm is not too large. Their results are more comparable with ours because they also allow for a $Z_{0}$ with all entries nonzero. However, to obtain consistency, $Z_{0}$ needs to be approximately sparse, i.e., the fraction of the entries that are large in magnitude needs to be shrinking to $0$ and the sum of the absolute values of the rest entries is $o_{p}(NT)$. This condition rejects many random models for $Z_{0}$, especially if $Z_{0}$ has very fat tails. By contrast, this paper only focuses on the linear decomposition model \eqref{eq1} and under stronger assumptions including a probabilistic model for $Z_{0}$, consistency is established even if all entries of $Z_{0}$ are nonzero and most of them are large in magnitude; we do not put any restrictions on the entries' tail distributions, so long as they have zero median and positive densities around $0$.


This paper also lies in the broader literature of estimating low-rank components in various settings. The following is only a small portion of many contributions in this literature. Tao and Yuan (2011), Xu \etal(2012), Wong and Lee (2017) and  Brahma \etal(2018) study the case where both a sparse component and a dense noise component exist besides the low-rank component in the decomposition. Xu \etal(2012) replaces the $\ell_{1}$ norm in the objective function with the $\ell_{1,2}$ norm. Wong and Lee (2017) changes it to the Huber loss function. Brahma \etal(2018) allows other general forms of penalty, such as SCAD penalty. To achieve consistency, they all need the noise matrix to have small norms. Wright \etal(2013) study the noiseless case but $Y$ is a compressive measurement of $L_{0}+Z_{0}$. Bach \etal(2008) and Negahban and Wainwright (2011) study regression models with low-rank coefficient matrices. Bai and Li (2012) impose a factor structure on the low-rank component and estimate it using MLE. Xie and Xing (2014) consider the Principal Component Analysis explicitly assuming Cauchy noise under the MLE framework and their minimization problem is nonconvex. Cai \etal(2010), Cand\`es and Recht (2009), and Cand\`es and Tao (2010), among others, study matrix completion problem for a low-rank matrix. Xia \etal(2018) develop methods for estimating low-rank tensors.

The rest of the paper is organized as follows. Section \ref{sec2} introduces the main technical tool  we developed for this problem, named \textit{Bernoulli device}. This device decomposes $Z_{0}$ into a matrix $D_{0}$ only containing $Z_{0}$'s small entries and a matrix $S_{0}$ with large entries of $Z_{0}$. Section \ref{sec3} extends results in Ganesh \etal(2010) by showing a \textit{dual certificate} exists even if the fraction of zero entries in $S_{0}$ decreases to $0$ slowly. Section \ref{sec4} presents the key condition for consistency derived from the optimality condition by using the dual certificate and exploiting the complementary structure of $D_{0}$ and $S_{0}$ endowed by the Bernoulli device. Section \ref{sec5} states and proves the main theorems of the paper. Simulation results are demonstrated in Section \ref{sec6}. Section \ref{sec7} concludes. Proofs of some of the lemmas are contained in the Appendix.

\subsection*{Notation}
Throughout, $||\cdot||_{*}$,~ $||\cdot||_{1}$, ~$||\cdot||_{F}$ and $||\cdot||_{\infty}$ denote the nuclear norm, the $\ell_{1}$ norm, the Frobenius norm, and the max norm of a matrix. $||\cdot||$ denotes the Euclidean norm of a vector, or the operator norm of a matrix or an operator. For the same matrix, $||\cdot||\leq||\cdot||_{F}\leq||\cdot||_{*}$ and $||\cdot||_{F}\leq||\cdot||_{1}$. For two generic scalars $a$ and $b$,  denote $a\land b\equiv \min\{a,b\}$ and $a\lor b\equiv \max\{a,b\}$. For any positive sequences $a$ and $b$, $a\asymp b$ means there exist $0<c_{1}\leq c_{2}<\infty$ such that $c_{1}a\leq b\leq c_{2}a$. For any matrices $A$ and $B$ of the same size, $A\circ B$ is the componentwise product of A and B. For any two random objects $X$ and $Y$, denote independence between $X$ and $Y$ by $X\independent Y$. Finally, $C$, $C'$, $C_{1}$ and $C_{2}$ denote generic positive constants that may be different in different uses.

\section{A Bernoulli Device}\label{sec2}
The object of interest in this paper is
\begin{equation}\label{eq3}
\frac{1}{NT}||\hat{L}-L_{0}||_{F}^{2}
\end{equation}
where $\hat{L}$ is defined in \eqref{eq2}. To bound this quantity and obtain consistency, we follow the idea in Cand\`es \etal(2011), Ganesh \etal(2010) and Zhou \etal(2010) to use a \textit{dual certificate}, a matrix which will be defined in the next section, to derive the optimality condition for \eqref{eq2}. This condition will then yield a bound for \eqref{eq3}. The main theoretical challenge is that in the first place, the existence of a dual certificate hinges on the existence of zero entries (with positive probability) in the error matrix, a luxury we do not have in this paper. The key idea is to decompose $Z_{0}$ into $S_{0}+D_{0}$ in such a way that i) a large enough fraction of entries in $S_{0}$ are $0$ with positive probability to guarantee the existence of a dual certificate, and ii) that fraction cannot be too large, on the other hand, so that \eqref{eq3} can be bounded by a function of $||D_{0}||_{1}$ that converges to $0$ in probability. We begin with constructing this decomposition using a Bernoulli device.

We first introduce the following assumption.
\begin{assumption}\label{ass1}
a) $L_{0}\independent  Z_{0}$.
b) $Z_{0,it}$ are independent and $med(Z_{0,it})=0$;
c) The set of densities of $Z_{0,it}$, $\{f_{Z_{0,it}}\}_{N,T}$ are equicontinuous and uniformly bounded away from $0$ at $0$.
\end{assumption}
Note that equicontinuity in c) can be replaced with continuity if we strengthen the independence condition in part b) to be i.i.d.

Under Assumption \ref{ass1}, let $\{\delta\}$ be a positive sequence such that $\delta\to 0$ as $N,T\to\infty$.
Let $(\underline{\gamma}_{it},\bar{\gamma}_{it})$ be a pair of constants satisfying
\begin{equation}\label{eq4}
\begin{aligned}
P(Z_{0,it}\geq \bar{\gamma}_{it})&=P(Z_{0,it}\leq \underline{\gamma}_{it})=\frac{1-\delta}{2}
\end{aligned}
\end{equation}
Assumption \ref{ass1} b) and c) guarantees the existence and uniqueness of such a pair for large enough $N,T$ while $\delta$ approaches $0$.

With $\underline{\gamma}_{it}$ and $\bar{\gamma}_{it}$, let $M$ be an $N\times T$ matrix whose entries are defined by
\begin{equation}\label{eq5}
M_{it}=\mathbbm{1}(\underline{\gamma}_{it}<Z_{0,it}<\bar{\gamma}_{it})
\end{equation}
where $\mathbbm{1}(\cdot)$ is the indicator function. Then let $D_{0}=M\circ Z_{0}$ and $S_{0}=Z_{0}-M\circ Z_{0}$ and we have
\begin{equation}\label{eq6}
Z_{0}=S_{0}+D_{0}
\end{equation}
Under Assumption \ref{ass1}, $D_{0}$ and $S_{0}$ have the following properties:
\begin{enumerate}
\item Both $S_{0}$ and $D_{0}$ contain $0$ entries and their locations are complementary;
\item By construction, $P(S_{0,it}=0)=\delta$ for any $(i,t)$. Meanwhile, by Assumption \ref{ass1} b) and c), for small enough $\delta$, $P(Z_{0,it}\leq 0)-P(Z_{0,it}\leq \underline{\gamma}_{it})=\frac{\delta}{2}=-f(\tilde{\gamma}_{it})\underline{\gamma}_{it}$ by the Mean Value Theorem where $\tilde{\gamma}_{it}$ lies between $\underline{\gamma}_{it}$ and $0$. Since $f_{Z_{0,it}}(0)$ is uniformly bounded away from $0$ over $N,T$ for small enough $\delta$, there exists uniform constants $C>0$ such that $-\underline{\gamma}_{it}<C\delta$. Similar results hold for $\bar{\gamma}_{it}$. Therefore, there exist $C'>0$ such that $|D_{0,it}|<C'\delta$ uniformly.

\item Let $E=\sign(S_{0})$, the the entries in $E$ are i.i.d. with $P(E_{it}=0)=\delta$ and $P(E_{it}=1)=P(Z_{0,it}\geq \bar{\gamma}_{it})=P(Z_{0,it}\leq \underline{\gamma}_{it})=P(E_{it}=-1)=\frac{1-\delta}{2}$.
\end{enumerate}

The Bernoulii device $M$ thus delivers a pair $(D_{0},S_{0})$ that achieves the two goals described in the beginning of this section. First, as will be seen, items 2 and 3 guarantee the existence of a dual certificate under certain conditions. Second, by Hoeffding's inequality, the order of $||D_{0}||_{1}$ is no greater than $C'NT\delta^{2}$ in high probability provided that $\delta$ converges to $0$ at an appropriate rate. Note this holds regardless of how the distribution of the original error $Z_{0,it}$ behaves except the requirements for the zero-median and the positive and continuous density at $0$.
\section{Dual Certificate}\label{sec3}

In this section, we treat $S_{0}$ as the ``sparse'' error matrix and show that although $\delta$ decreases to $0$, a dual certificate that is similar to Ganesh \etal(2010) exists. As mentioned in the introduction, Ganesh \etal(2010) show a dual certificate exists for any small yet fixed $\delta$. We extend their results by carefully choosing the rate of $\delta$, $\lambda$ and other constants in their proof. We closely follow their construction of the dual certificate but for completeness, we record it here and shall indicate where the construction needs to be modified to handle a shrinking $\delta$ by construction.

First we need an identification condition to guarantee $L_{0}$ to be non-sparse so that it is distinguishable from $S_{0}$. We adopt the incoherence condition in Cand\`es and Recht (2009), Cand\`es \etal(2011), Ganesh \etal(2010), \textit{etc}. Besides it, as there is an additional constraint in \eqref{eq2}, we need a condition to guarantee $L_{0}$ to be a feasible solution. Let $U\Sigma V^{*}$ be a singular value decomposition of $L_{0}$, where $U$ and $V$ are $N\times r$ and $T\times	r$ matrices of left and right singular vectors and $\Sigma$ is an $r\times r$ diagonal matrix with singular values in descending order on its diagonal.
\begin{assumption}\label{ass2}
There exists a constant $C$ such that with probability approaching 1,
\begin{align}
\max_{i}||U^{*}e_{i}||^{2}\leq C\frac{\mu r}{N},&\ \max_{t}||V^{*}e_{t}||^{2}\leq C \frac{\mu r}{T}\label{eq8}\\
||L_{0}||_{\infty}&\leq \alpha\label{eq9}
\end{align}
where $\mu$, $r$ and $\alpha$ can be $N,T$ dependent. $(e_{i})_{i}$ and $(e_{t})_{t}$ are canonical bases of N- and T- dimensional linear spaces.
\end{assumption}
Inequality \eqref{eq8} in Assumption \ref{ass2} is the incoherence condition, stating that the singular vectors of $L_{0}$ are well-spread. A direct and useful consequence of \eqref{eq8} is that
\begin{equation}\label{eq10}
||UV^{*}||_{\infty}\leq C \frac{\mu r}{\sqrt{NT}}
\end{equation}
 by noticing that $||UV^{*}||_{\infty}=\max_{it}|\sum_{k=1}^{r}U_{ik}V_{tk}|\leq\sqrt{\sum_{k=1}^{r}|U_{ik}|^{2}}\cdot \sqrt{\sum_{k=1}^{r}|V_{tk}|^{2}}\leq C\frac{\mu r}{\sqrt{NT}}$ where the first inequality follows from the Cauchy-Schwarz inequality and the second is from \eqref{eq8}. Here $\mu$ characterizes how coherent the singular vectors are with the canonical bases. It can be $N,T$ dependent and is allowed to diverge to $\infty$. Cand\`es and Recht (2009) provide examples where $\mu=O(\log(N\lor T))$ and one of them is the \textit{random orthogonal model} in which the columns in $U$ and $V$ are sampled uniformly among all families of $r$ orthonormal vectors independently of each other. Fan \etal(2018) also give an example where $\mu=O\big(\sqrt{\log(N\lor T)}\big)$.

Inequality \eqref{eq9} is an inclusion assumption which implies $L_{0}$ is a feasible solution with probability approaching 1. It restricts the magnitude of the maximal entry in $L_{0}$. Again $\alpha$ is allowed to increase to $\infty$ with $N$ and $T$. Note that \eqref{eq9}  and \eqref{eq10} imply that $L_{0}$'s largest singular value $\sigma_{1}\leq \frac{\alpha}{C}\frac{\sqrt{NT}}{\mu r}$ because $L_{0}=\sum_{k=1}^{r}u_{k}v^{*}_{k}\sigma_{k}$ while $UV^{*}=\sum_{k=1}^{r}u_{k}v^{*}_{k}$, where $u_{k}$ and $v_{k}$ are the $k$th column of $U$ and $V$, respectively.

Before we define the dual certificate, it is useful to introduce some notations.

Let $\Phi$ be the linear space of matrices
\begin{equation*}
\Phi\equiv \{UX^{*}+YV^{*},X\in \mathbb{R}^{T\times r}, Y\in\mathbb{R}^{N\times r}\}
\end{equation*}
and let its orthogonal complement be $\Phi^{\perp}$. Denote the linear projection onto $\Phi$ and $\Phi^{\perp}$ by $\mathcal{P}_{\Phi}$ and $\mathcal{P}_{\Phi^{\perp}}$, respectively. Then it can be shown that for any $N\times T$ matrix $R$ (e.g. Cand\`es and Recht (2009)),
\begin{equation*}
\mathcal{P}_{\Phi}R=UU^{*}R+RVV^{*}-UU^{*}RVV^{*}
\end{equation*}
and
\begin{equation*}
\mathcal{P}_{\Phi^{\perp}}R=(I-UU^{*})R(I-VV^{*})
\end{equation*}

Let $\Omega$ be the support of $S_{0}$, i.e., $\Omega=\{(i,t):S_{0,it}\neq 0\}$. With a slight abuse of notation, we also denote the linear space of matrices supported on $\Omega$ by $\Omega$. The projection onto this space is denoted by $\mathcal{P}_{\Omega}$. Specifically, for an $N\times T$ matrix $R$,
\begin{equation}\label{eq11}
(\mathcal{P}_{\Omega}R)_{i,t}=\mathbbm{1}\big((i,t)\in\Omega\big)\cdot R_{it}
\end{equation}
The complement of the support set $\Omega$ is denoted by $\Omega^{c}$. Let the linear space of matrices supported on it be $\Omega^{\perp}$ and the projection onto the space be $\mathcal{P}_{\Omega^{\perp}}$, defined similarly as \eqref{eq11}.

Finally, we characterize the subgradient of $||L||_{*}$ and $||S||_{1}$ using these notations. The subgradient of $||L||_{*}$ evaluated at $L_{0}$ is equal to $UV^{*}+W$ where $\mathcal{P}_{\Phi}W=0$ and $||W||\leq 1$. Meanwhile, recall that $E$ denotes the sign of $S_{0}$, so the subgradient of $||S||_{1}$ at $S_{0}$ is equal to $E+F$, where $\mathcal{P}_{\Omega}F=0$ and $||F||_{\infty}\leq 1$.

Now we are ready to define the dual certificate $W$ as any $N\times T$ matrix satisfying the following conditions:
\begin{equation}\label{eq12}
\left\{ \begin{array}{c}
\mathcal{P}_{\Phi}W=0\\
||W||\leq \frac{1}{2}\\
||\mathcal{P}_{\Omega}(UV^{*}+W-\lambda E)||_{F}\leq \frac{\lambda\delta}{16} \\
||\mathcal{P}_{\Omega^{\perp}}(UV^{*}+W)||_{\infty}<\frac{\lambda}{2}
\end{array}\right.
\end{equation}

Note this definition is very similar to equation (6) in Ganesh \etal(2010). The only important difference occurs on the right hand side of the third inequality.

Now we present the construction of $W$ that is similar to Ganesh \etal(2010) with necessary modifications to accommodate $\delta\to 0$.

Let $W=W_{L}+W_{S}$:
\begin{itemize}
\item Construction of $W_{L}$. As we have a more delicate random model and more structures regarding $D_{0}$ and $S_{0}$, we need more subtle argument to justify the construction of $W_{L}$. For any $(i,t)$, write $D_{0,it}$ and $S_{0,it}$ as follows:
\begin{align*}
D_{0,it}=&\Delta_{1,it}\cdot\tilde{Z}_{1,it}\\
S_{0,it}=&(1-\Delta_{1,it})\cdot\big(\Delta_{2,it}\cdot \tilde{Z}_{2,it}+(1-\Delta_{2,it})\cdot \tilde{Z}_{3,it}\big)\\
Z_{0,it}=&D_{0,it}+S_{0,it}
\end{align*}
where $\Delta_{1,it}=1-\prod_{j=1}^{j_{0}}(1-\tilde{\Delta}_{j,it})$, $\tilde{\Delta}_{j,it}\stackrel{i.i.d.}{\sim}Ber(q)$ such that $1-\delta=(1-q)^{j_{0}}$. $\Delta_{2,it}\stackrel{i.i.d.}{\sim}Ber(1/2)$. $\tilde{Z}_{1,it}$, $\tilde{Z}_{2,it}$ and $\tilde{Z}_{3,it}$ follow $Z_{0,it}$'s distribution truncated between $\underline{\gamma}_{it}$ and $\bar{\gamma}_{it}$, below $\underline{\gamma}_{it}$, and above $\bar{\gamma}_{it}$, respectively. All these Bernoulli random variables and $\tilde{Z}_{1,it}$, $\tilde{Z}_{2,it}$ and $\tilde{Z}_{3,it}$ are independent. It can be verified the distribution of $(D_{0,it},S_{0,it})$ as well as $Z_{0,it}$ are the same as the original model, so the two models are equivalent.

Now let $\Omega_{j}=\{(i,t):\tilde{\Delta}_{j,it}=1\}$. By construction, $\Omega^{c}=\cup_{j=1}^{j_{0}}\Omega_{j}$. Unlike Ganesh \etal(2010) in which $j_{0}=2\log(N)$, in this paper we need $j_{0}$ to be finite such that $q$ converges to $0$ at the same rate as $\delta$. As shown in Lemma \ref{lem2}, $j_{0}=4$ is sufficient. Now let $Q_{0}=0$, and
\begin{equation}\label{13}
Q_{j}=Q_{j-1}+q^{-1}\mathcal{P}_{\Omega_{j}}\mathcal{P}_{\Phi}(UV^{*}-Q_{j-1}),j=1,2,...,j_{0}
\end{equation}
Finally, let $W_{L}=\mathcal{P}_{\Phi^{\perp}}Q_{j_{0}}$. This construction is called the \textit{golfing scheme}; it was first developed by Gross (2011) and Gross \etal(2010) for matrix completion problems and was later extended to matrix separation problems by Cand\`es \etal(2011).
\item Construction of $W_{S}$. Let $W_{S}=\lambda \mathcal{P}_{\Phi^{\perp}}\sum_{k\geq 0}(\mathcal{P}_{\Omega}\mathcal{P}_{\Phi}\mathcal{P}_{\Omega})^{k}E$, provided that $||\mathcal{P}_{\Omega}\mathcal{P}_{\Phi}\mathcal{P}_{\Omega}||=||\mathcal{P}_{\Omega}\mathcal{P}_{\Phi}||^{2}<1$ to guarantee the Neumann series $\sum_{k\geq 0}(\mathcal{P}_{\Omega}\mathcal{P}_{\Phi}\mathcal{P}_{\Omega})^{k}$ is well defined and is equal to $(\mathcal{P}_{\Omega}-\mathcal{P}_{\Omega}\mathcal{P}_{\Phi}\mathcal{P}_{\Omega})^{-1}$. As pointed out by Ganesh \etal(2010), $W_{S}$ can be viewed as constructed using least squares; it has the smallest Frobenius norm among matrices $\tilde{W}$ satisfying $\mathcal{P}_{\Omega}\tilde{W}=\lambda E$ and $\mathcal{P}_{\Phi}\tilde{W}=0$.
\end{itemize}

The following lemmas then provide sufficient conditions for $W=W_{L}+W_{S}$ to satisfy \eqref{eq12}. The proof of Lemma \ref{lem1} is omitted because it is immediately implied by Theorem 2.6 in Cand\`es \etal(2011), stated in the Appendix. The proof of Lemma \ref{lem2} follow Lemma 3 and 4 in Ganesh \etal(2010) closely, but are tailored in a way to allow $\delta\to 0$. For completeness, they are contained in the Appendix.

\begin{lem}\label{lem1}
Suppose $\delta\geq C\frac{\mu r\log (N\lor T)}{\varepsilon^{2}(N\land T)}$. Then under Assumptions \ref{ass1} and \ref{ass2}, $||\mathcal{P}_{\Omega}\mathcal{P}_{\Phi}||^{2}\leq 1-\delta+\varepsilon\delta$ with high probability.
\end{lem}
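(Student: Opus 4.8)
The plan is to read the bound off the near-isometry property of Bernoulli sampling on the tangent space $\Phi$, furnished by Theorem 2.6 of Cand\`es \etal(2011). The one genuinely new observation is that here $\Omega$ is a \emph{large} set: since $P(S_{0,it}\neq 0)=1-\delta$ and, by Assumption \ref{ass1} b), the entries of $S_0$ are independent, $\Omega$ is a Bernoulli set in which each location is retained with probability $1-\delta\to 1$. Because the cited estimate is designed for sparse supports, I would apply it not to $\Omega$ but to its complement $\Omega^{c}$, which is a Bernoulli set with inclusion probability exactly $\delta$; I condition on $L_0$ throughout and use Assumption \ref{ass1} a) so that $\Omega^{c}$ remains Bernoulli($\delta$) independently of the (now fixed) space $\Phi$.

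First I would recast the target as a quadratic form on $\Phi$. Since $\mathcal{P}_\Omega$ is a self-adjoint idempotent, $\mathcal{P}_\Phi\mathcal{P}_\Omega\mathcal{P}_\Phi=(\mathcal{P}_\Omega\mathcal{P}_\Phi)^{*}(\mathcal{P}_\Omega\mathcal{P}_\Phi)$ is positive semidefinite, so $\|\mathcal{P}_\Omega\mathcal{P}_\Phi\|^2=\|\mathcal{P}_\Phi\mathcal{P}_\Omega\mathcal{P}_\Phi\|$, and by the variational characterization of the norm of such an operator this equals $\sup\{\|\mathcal{P}_\Omega X\|_F^2:X\in\Phi,\ \|X\|_F=1\}$. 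Splitting the identity as $\mathcal{P}_\Omega+\mathcal{P}_{\Omega^{\perp}}$ and using $X=\mathcal{P}_\Phi X$ together with the self-adjoint idempotence of $\mathcal{P}_{\Omega^{\perp}}$, for any unit $X\in\Phi$,
\begin{equation*}
\|\mathcal{P}_\Omega X\|_F^2=\|X\|_F^2-\|\mathcal{P}_{\Omega^{\perp}}X\|_F^2=1-\langle \mathcal{P}_\Phi\mathcal{P}_{\Omega^{\perp}}\mathcal{P}_\Phi X,\,X\rangle .
\end{equation*}
Thus the problem reduces to a \emph{lower} bound, uniform over $\Phi$, on the quadratic form of $\mathcal{P}_\Phi\mathcal{P}_{\Omega^{\perp}}\mathcal{P}_\Phi$.

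That lower bound is exactly what Theorem 2.6 delivers. Applied to the sampling set $\Omega^{c}$ with inclusion probability $\delta$, and invoking the incoherence condition \eqref{eq8} together with the hypothesis $\delta\geq C\mu r\log(N\lor T)/(\varepsilon^{2}(N\land T))$, it gives, with high probability, $\|\delta^{-1}\mathcal{P}_\Phi\mathcal{P}_{\Omega^{\perp}}\mathcal{P}_\Phi-\mathcal{P}_\Phi\|\leq\varepsilon$. Evaluating at any unit $X\in\Phi$ yields $\langle\mathcal{P}_\Phi\mathcal{P}_{\Omega^{\perp}}\mathcal{P}_\Phi X,\,X\rangle\geq\delta-\varepsilon\delta$, and substituting into the display gives $\|\mathcal{P}_\Omega X\|_F^2\leq1-\delta+\varepsilon\delta$; taking the supremum over $X$ finishes the proof.

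The step I expect to be the crux is this reversal of roles: one must recognize that the sharp statement cannot come from sampling $\Omega$ directly, and instead phrase everything through $\Omega^{c}$, being careful that the isometry error scales like $\varepsilon\delta$ rather than $\varepsilon$. It is precisely this scaling that produces the tight right-hand side $1-\delta+\varepsilon\delta$ and makes the stated lower bound on $\delta$ the operative sample-complexity requirement.
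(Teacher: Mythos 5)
Your argument is correct and is exactly the route the paper intends: the paper omits the proof of Lemma \ref{lem1} as an immediate consequence of Theorem 2.6 of Cand\`es \textit{et al.} (2011) (Lemma A1 in the Appendix), and the standard way that implication works is precisely your reduction --- apply the near-isometry to $\Omega^{c}\sim Ber(\delta)$, use $\|\mathcal{P}_{\Omega}X\|_{F}^{2}=1-\langle\mathcal{P}_{\Phi}\mathcal{P}_{\Omega^{\perp}}\mathcal{P}_{\Phi}X,X\rangle$ for unit $X\in\Phi$, and note that the error scales as $\varepsilon\delta$. Your identification of the hypothesis $\delta\geq C\mu r\log(N\lor T)/(\varepsilon^{2}(N\land T))$ as the sampling condition for $\Omega^{c}$, and the conditioning on $L_{0}$ via Assumption \ref{ass1} a), are both exactly what the citation presupposes.
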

\begin{lem}\label{lem2}
Suppose $N\asymp T$, $\mu>\log N$, $\frac{\mu^{11/3}r^{3}}{N^{1/3}}=o(1)$, $\delta\asymp \frac{\mu r}{N^{1/3}}$, $\lambda\asymp \frac{\mu^{1/3}}{N^{2/3}}$, $\varepsilon\asymp\frac{\log N}{N^{1/3}}$. If $j_{0}=4$, then under Assumptions \ref{ass1} and \ref{ass2}, with high probability
\newline
a). $||W_{L}||\leq 1/4$,
\newline
b). $||\mathcal{P}_{\Omega}(UV^{*}+W_{L})||_{F}<\frac{\lambda\delta}{16}$,
\newline
c). $||\mathcal{P}_{\Omega^{\perp}}(UV^{*}+W_{L})||_{\infty}<\frac{\lambda}{4}$.
\newline
d). $||W_{S}||<1/4$,
\newline
e). $||\mathcal{P}_{\Omega^{\perp}}W_{S}||_{\infty}<\frac{\lambda}{4}$.
\end{lem}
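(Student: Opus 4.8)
The plan is to bound the two summands of $W=W_L+W_S$ separately: parts (a)--(c) concern the golfing iterate $W_L$ and parts (d)--(e) the least-squares term $W_S$, following Lemmas 3 and 4 of Ganesh \etal(2010). Since the construction is theirs, the real content is to recheck that every estimate survives when $\delta\to0$ while $j_0$ is held at the \emph{finite} value $4$. I will use three concentration facts, all consequences of Theorem 2.6 of Cand\`es \etal(2011) that already underlies Lemma \ref{lem1}, applied to a set $\Omega_j$ sampled with probability $q$, where $1-\delta=(1-q)^4$ forces $q\asymp\delta\asymp\mu r/N^{1/3}$: (i) the $\Phi$-contraction $||\mathcal{P}_{\Phi}-q^{-1}\mathcal{P}_{\Phi}\mathcal{P}_{\Omega_j}\mathcal{P}_{\Phi}||\le\sigma$; (ii) its max-norm analogue $||(\mathcal{P}_{\Phi}-q^{-1}\mathcal{P}_{\Phi}\mathcal{P}_{\Omega_j}\mathcal{P}_{\Phi})Z||_{\infty}\le\sigma||Z||_{\infty}$ for $Z\in\Phi$; and (iii) the deviation bound $||(q^{-1}\mathcal{P}_{\Omega_j}-I)Z||\le C\sqrt{(N\lor T)/q}\,||Z||_{\infty}$. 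Feeding $q\asymp\delta$ into the admissibility condition of Lemma \ref{lem1} makes $\sigma$ polynomially small, $\sigma=o(1)$. This is the one place where a finite $j_0$ is essential: a growing $j_0$, as in Ganesh \etal(2010), would drive $q$ below the threshold that (i)--(iii) require, whereas $j_0=4$ keeps $q\asymp\delta$ as large as possible.

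For $W_L$ I track the residual $G_j:=UV^{*}-\mathcal{P}_{\Phi}Q_j\in\Phi$. The recursion \eqref{13} gives $G_j=(\mathcal{P}_{\Phi}-q^{-1}\mathcal{P}_{\Phi}\mathcal{P}_{\Omega_j}\mathcal{P}_{\Phi})G_{j-1}$, so (i), (ii) and \eqref{eq10} yield $||G_j||_{F}\le\sigma^{j}\sqrt{r}$ and $||G_j||_{\infty}\le\sigma^{j}C\mu r/\sqrt{NT}$, starting from $||UV^{*}||_{F}=\sqrt{r}$. Because $Q_{j_0}$ is supported on $\Omega^{c}$ one checks $UV^{*}+W_L=Q_{j_0}+G_{j_0}$, whence $\mathcal{P}_{\Omega}(UV^{*}+W_L)=\mathcal{P}_{\Omega}G_{j_0}$ and part (b) reduces to $||G_{j_0}||_{F}\le\sigma^{4}\sqrt{r}<\lambda\delta/16\asymp\mu^{4/3}r/N$; this comparison is exactly what singles out $j_0=4$ as the smallest admissible depth. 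Part (c) follows from the same identity on $\Omega^{c}$: $||Q_{j_0}||_{\infty}\lesssim q^{-1}||UV^{*}||_{\infty}\asymp N^{-2/3}=o(\lambda)$ since $\mu\to\infty$, and $||G_{j_0}||_{\infty}$ is smaller still. Part (a) comes from $W_L=\sum_{j}\mathcal{P}_{\Phi^{\perp}}(q^{-1}\mathcal{P}_{\Omega_j}-I)G_{j-1}$ (using $\mathcal{P}_{\Phi^{\perp}}G_{j-1}=0$) and bound (iii); the $j=1$ term dominates and is of order $\sqrt{\mu r}/N^{1/3}=o(1)$, where $(\mu r)^{3}\le\mu^{11/3}r^{3}=o(N^{1/3})$ is used.

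For $W_S$, the Neumann series is well defined because Lemma \ref{lem1} gives $||\mathcal{P}_{\Omega}\mathcal{P}_{\Phi}||^{2}\le1-\delta+\varepsilon\delta<1$. Writing $HE:=\sum_{k\ge0}(\mathcal{P}_{\Omega}\mathcal{P}_{\Phi}\mathcal{P}_{\Omega})^{k}E$, which is supported on $\Omega$, part (e) reduces to $||\mathcal{P}_{\Omega^{\perp}}W_S||_{\infty}=\lambda||\mathcal{P}_{\Omega^{\perp}}\mathcal{P}_{\Phi}HE||_{\infty}$, bounded entrywise by a Hoeffding estimate over the i.i.d.\ signs of $E$ (item 3 of the Bernoulli device) together with \eqref{eq10}. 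The hard part is (d). The naive estimate $||W_S||\le\lambda||E||\,||(\mathcal{P}_{\Omega}\mathcal{P}_{\Phi^{\perp}}\mathcal{P}_{\Omega})^{-1}||\asymp\lambda\sqrt{N}/\delta$ diverges under the present budget, so the crude Neumann bound cannot be used. My plan is to split $W_S=\lambda\mathcal{P}_{\Phi^{\perp}}E+\lambda\mathcal{P}_{\Phi^{\perp}}\mathcal{P}_{\Omega}\mathcal{P}_{\Phi}\mathcal{P}_{\Omega}(HE)$: the first piece is at most $\lambda||E||\lesssim\lambda\sqrt{N}=o(1)$ by the standard spectral-norm bound for the mean-zero sign matrix $E$, while for the second I extract a factor $\sqrt{\delta}$ from $||\mathcal{P}_{\Phi^{\perp}}\mathcal{P}_{\Omega}\mathcal{P}_{\Phi}||=||\mathcal{P}_{\Phi^{\perp}}\mathcal{P}_{\Omega^{c}}\mathcal{P}_{\Phi}||\le||\mathcal{P}_{\Omega^{c}}\mathcal{P}_{\Phi}||\lesssim\sqrt{\delta}$ (a Lemma \ref{lem1}-type bound for the low-density set $\Omega^{c}$) to offset the $1/\delta$ in $||HE||\lesssim\sqrt{N}/\delta$, leaving a bound of order $\lambda\sqrt{N/\delta}\asymp\mu^{-1/6}r^{-1/2}=o(1)$. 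Trading the full $1/\delta$ for a single $\sqrt{\delta}$ saving is the essential adaptation of Ganesh \etal(2010) to a vanishing $\delta$, and checking that this estimate, the golfing bound (b), and part (a) are simultaneously $o(1)$ is precisely what the composite condition $\mu^{11/3}r^{3}/N^{1/3}=o(1)$ enforces.
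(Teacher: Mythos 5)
Your treatment of $W_L$ and of part (e) tracks the paper's argument essentially step for step: the residual $G_j=UV^{*}-\mathcal{P}_{\Phi}Q_j$ is what the paper calls $X_j$, the three concentration facts are exactly Lemmas A1--A3 in the appendix, the identity $\mathcal{P}_{\Omega}(UV^{*}+W_L)=\mathcal{P}_{\Omega}G_{j_0}$ and the comparison $\varepsilon^{4}\sqrt{r}$ versus $\lambda\delta/16\asymp \mu^{4/3}r/N$ for (b), the split $\|G_{j_0}\|_{F}+\|Q_{j_0}\|_{\infty}$ with $\|Q_{j_0}\|_{\infty}\lesssim q^{-1}\|UV^{*}\|_{\infty}\asymp N^{-2/3}$ for (c), the telescoping of $(q^{-1}\mathcal{P}_{\Omega_j}-I)G_{j-1}$ for (a), and the entrywise Hoeffding bound for (e) all coincide with the paper's proof, and your order counts are right (up to a harmless $\sqrt{\log N}$ factor you drop in (a)).

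The one genuine gap is in part (d), which you correctly identify as the hard step. Your plan multiplies $\|\mathcal{P}_{\Phi^{\perp}}\mathcal{P}_{\Omega}\mathcal{P}_{\Phi}\|\lesssim\sqrt{\delta}$ by a claimed bound $\|HE\|\lesssim\sqrt{N}/\delta$, but these two quantities live in different norms and the product is not a valid estimate: the first is an operator norm with respect to the Frobenius inner product on matrix space, so it only controls $\|\mathcal{P}_{\Phi^{\perp}}\mathcal{P}_{\Omega}\mathcal{P}_{\Phi}X\|_{F}\leq\sqrt{\delta}\,\|X\|_{F}$, while $\|HE\|\lesssim\sqrt{N}/\delta$ must be read in spectral norm (the Frobenius version is false, since $\|HE\|_{F}\geq\|E\|_{F}\asymp N$ because $(\mathcal{P}_{\Omega}-\mathcal{P}_{\Omega}\mathcal{P}_{\Phi}\mathcal{P}_{\Omega})$ has norm at most one). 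Chaining through Frobenius norms gives only $\lambda\sqrt{\delta}\cdot N/\delta$, which diverges; and the spectral claim on $HE$ is itself a nontrivial probabilistic statement that would need its own net argument. The paper's resolution is to keep the whole tail as a single operator $\mathcal{R}=\mathcal{P}_{\Phi^{\perp}}\sum_{k\geq1}(\mathcal{P}_{\Omega}\mathcal{P}_{\Phi}\mathcal{P}_{\Omega})^{k}$, prove $\|\mathcal{R}\|\leq 2/\sqrt{\delta}$ in the Frobenius operator norm using exactly your $\sqrt{\delta}$ extraction from $\mathcal{P}_{\Phi^{\perp}}\mathcal{P}_{\Omega}\mathcal{P}_{\Phi}=-\mathcal{P}_{\Phi^{\perp}}\mathcal{P}_{\Omega^{\perp}}\mathcal{P}_{\Phi}$, and then bound the spectral norm $\|\mathcal{R}E\|$ directly by a $\tau$-net over the unit spheres plus Hoeffding conditional on $\Omega,U,V$, with $\|\mathcal{R}(xy')\|_{F}^{2}\leq 4/\delta$ serving as the variance proxy. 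That is where the $1/\sqrt{\delta}$ enters --- inside the exponential tail, not as a multiplicative factor on a norm of $HE$ --- and it is also where the extra $\sqrt{N}$ (relative to the deterministic $N/\sqrt{\delta}$) is gained from the randomness of the signs of $E$. Your final order $\lambda\sqrt{N/\delta}\asymp\mu^{-1/6}r^{-1/2}$ is the right target, but the argument needs to be restructured along these lines to reach it.
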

\begin{rem}
The condition $N\asymp T$ requires $N$ and $T$ to diverge to infinity at the same rate. Divergence of both $N$ and $T$ is a theoretical feature when modeling the high-dimensional data, yet the rate condition is indeed stronger than in the existing PCP literature as most of the work do not restrict the rate at all. Relaxation can be made to some extent by tuning the rate of other parameters like $\delta$ and $\lambda$.
\end{rem}
\begin{rem}
The rates for $\delta$, $\lambda$ and $\varepsilon$ in Lemma \ref{lem2} are not unique. As will be seen in the next section, the convergence rate of $\frac{1}{NT}||\hat{L}-L_{0}||_{F}^{2}$ will be determined only by $\delta$, $\alpha$, $r$ and $\mu$, and the choice of the rates in Lemma \ref{lem1} yields the fastest converging $\delta$ while $W$ satisfies \eqref{eq12}.
\end{rem}
\begin{rem}
The rank is allowed to diverge as long as $r\leq C N^{1/9}\mu^{-11/9}(\log N)^{-1}$. Note this rate is smaller than the rate allowed in Ganesh \etal(2010), which is $CN\mu^{-1}(\log N)^{-2}$. The loss is unavoidable because from the condition in Lemma \ref{lem1}, instead of being constant, $\delta$ now decreases to $0$ so the order allowed for $r$ is smaller. The insight is that we are trading off between sparsity of the error matrix and sparsity of the vector of the singular values of $L_{0}$. It suggests when the random error matrix is continuous, the estimator may not perform well in finite sample for $L_{0}$ with relatively high rank.
\end{rem}

\section{Optimality Condition}\label{sec4}
In this section, using the dual certificate $W$ and the complementarity of the support sets of $S_{0}$ and $D_{0}$, we derive the optimality condition for the minimization problem in \eqref{eq2} and as it turns out, the condition induces an upper bound for \eqref{eq3} that facilitates the analysis of consistency.

For any feasible solution $L$ to \eqref{eq2}, let $Z\equiv Y-L$. Let $H\equiv L-L_{0}$ so $Z=Z_{0}-H$. By the Bernoulli device, we have the following properties: i) if we consider the difference $H_{S}\equiv Z-S_{0}$, then by construction $H_{S}=D_{0}-H$, and ii) $\mathcal{P}_{\Omega^{\perp}}D_{0}=D_{0}$. Utilizing them, we have the following lemma.
\begin{lem}\label{lem4}
Suppose $||\mathcal{P}_{\Omega}\mathcal{P}_{\Phi}||^{2}\leq 1-\delta+\varepsilon\delta$, $\delta\to 0$, $\varepsilon\to 0$ as $N,T\to\infty$, and the dual certificate $W$ satisfying \eqref{eq12} exists. For any disturbance $(H,-H)$ at $(L_{0},Z_{0})$, if
\begin{equation}\label{eq13}
||L_{0}+H||_{*}+\lambda ||Z_{0}-H||_{1}\leq ||L_{0}||_{*}+\lambda ||Z_{0}||_{1},
\end{equation}
then for large enough $N$ and $T$,
\begin{equation}\label{eq14}
\frac{1}{\lambda}||\mathcal{P}_{\Phi^{\perp}}H||_{*}+ ||\mathcal{P}_{\Omega^{\perp}}H||_{1}\leq 8||D_{0}||_{1}
\end{equation}
\end{lem}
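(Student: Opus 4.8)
The plan is to start from the hypothesized optimality inequality \eqref{eq13}, replace each of the two norms on its left-hand side by a subgradient (convexity) lower bound, cancel the common terms, and then control the resulting term that is linear in $H$ using the dual certificate together with the near-orthogonality hypothesis. The Bernoulli decomposition $Z_{0}=S_{0}+D_{0}$ and the complementarity of the supports of $S_{0}$ and $D_{0}$ are what let the estimation-error quantity $||D_{0}||_{1}$ emerge on the right.

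First I would lower-bound the nuclear-norm term. Since the subdifferential of $||\cdot||_{*}$ at $L_{0}$ consists of all $UV^{*}+W_{0}$ with $\mathcal{P}_{\Phi}W_{0}=0$ and $||W_{0}||\leq 1$, choosing $W_{0}$ to be the ``sign'' of $\mathcal{P}_{\Phi^{\perp}}H$ gives $||L_{0}+H||_{*}\geq ||L_{0}||_{*}+\langle UV^{*},H\rangle+||\mathcal{P}_{\Phi^{\perp}}H||_{*}$. For the $\ell_{1}$ term I would write $Z_{0}-H=S_{0}+H_{S}$ with $H_{S}=D_{0}-H$ and apply the subgradient inequality for $||\cdot||_{1}$ at $S_{0}$ (whose sign is $E$), selecting the free part to extract $||\mathcal{P}_{\Omega^{\perp}}H_{S}||_{1}$. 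Here the complementary supports are essential: they give $||Z_{0}||_{1}=||S_{0}||_{1}+||D_{0}||_{1}$, $\langle E,D_{0}\rangle=0$, and $\mathcal{P}_{\Omega^{\perp}}D_{0}=D_{0}$, so that after a reverse triangle inequality one obtains $||Z_{0}-H||_{1}\geq ||Z_{0}||_{1}-2||D_{0}||_{1}-\langle E,H\rangle+||\mathcal{P}_{\Omega^{\perp}}H||_{1}$.

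Substituting both bounds into \eqref{eq13} and cancelling $||L_{0}||_{*}$ and $\lambda||Z_{0}||_{1}$ yields the key intermediate inequality
\begin{equation*}
||\mathcal{P}_{\Phi^{\perp}}H||_{*}+\lambda||\mathcal{P}_{\Omega^{\perp}}H||_{1}\leq 2\lambda||D_{0}||_{1}+\langle \lambda E-UV^{*},H\rangle .
\end{equation*}
I would then bound the linear term by inserting the dual certificate: writing $\lambda E-UV^{*}=W-(UV^{*}+W-\lambda E)$, the term $\langle W,H\rangle$ is controlled by $||W||\leq 1/2$ and operator/nuclear duality, while splitting the remaining inner product over $\Omega$ and $\Omega^{\perp}$ and invoking the third and fourth conditions in \eqref{eq12} (using that $\lambda E$ is supported on $\Omega$) gives $\langle \lambda E-UV^{*},H\rangle\leq \tfrac{1}{2}||\mathcal{P}_{\Phi^{\perp}}H||_{*}+\tfrac{\lambda\delta}{16}||\mathcal{P}_{\Omega}H||_{F}+\tfrac{\lambda}{2}||\mathcal{P}_{\Omega^{\perp}}H||_{1}$.

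The step I expect to be the crux is controlling the leftover $||\mathcal{P}_{\Omega}H||_{F}$, which is not among the quantities we wish to bound. Here I would use the hypothesis $||\mathcal{P}_{\Omega}\mathcal{P}_{\Phi}||^{2}\leq 1-\delta+\varepsilon\delta$: restricting $\mathcal{P}_{\Omega}-\mathcal{P}_{\Omega}\mathcal{P}_{\Phi}\mathcal{P}_{\Omega}$ to the subspace $\Omega$ shows its smallest eigenvalue is at least $\delta(1-\varepsilon)$, whence $||\mathcal{P}_{\Phi^{\perp}}\mathcal{P}_{\Omega}H||_{F}^{2}\geq \delta(1-\varepsilon)||\mathcal{P}_{\Omega}H||_{F}^{2}$; expanding $\mathcal{P}_{\Phi^{\perp}}\mathcal{P}_{\Omega}H=\mathcal{P}_{\Phi^{\perp}}H-\mathcal{P}_{\Phi^{\perp}}\mathcal{P}_{\Omega^{\perp}}H$ and using that projections are contractions yields $||\mathcal{P}_{\Omega}H||_{F}\leq (\delta(1-\varepsilon))^{-1/2}(||\mathcal{P}_{\Phi^{\perp}}H||_{*}+||\mathcal{P}_{\Omega^{\perp}}H||_{1})$. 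Plugging this in, the coefficient multiplying the two target quantities becomes $\tfrac{\lambda\sqrt{\delta}}{16\sqrt{1-\varepsilon}}$, which vanishes because $\delta,\varepsilon\to 0$; for $N,T$ large it can be absorbed into the left-hand side, leaving each of $\tfrac{1}{\lambda}||\mathcal{P}_{\Phi^{\perp}}H||_{*}$ and $||\mathcal{P}_{\Omega^{\perp}}H||_{1}$ with coefficient close to $\tfrac{1}{2}$, hence at least $\tfrac{1}{4}$, which after dividing by $\lambda$ delivers \eqref{eq14} with the constant $8$. The delicate point throughout is that the extra factor $\delta$ in the third condition of \eqref{eq12} is precisely what keeps $\tfrac{\lambda\delta}{16}\cdot(\delta(1-\varepsilon))^{-1/2}$ small despite the $1/\sqrt{\delta}$ blow-up of $||\mathcal{P}_{\Omega}H||_{F}$.
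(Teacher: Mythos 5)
Your proof is correct and follows essentially the same route as the paper's: subgradient lower bounds at $(L_{0},S_{0})$, insertion of the dual certificate conditions \eqref{eq12}, and the complementarity $\mathcal{P}_{\Omega^{\perp}}D_{0}=D_{0}$ to surface $||D_{0}||_{1}$, landing on the same constant $8$. The only substantive variation is your control of $||\mathcal{P}_{\Omega}H||_{F}$ via the eigenvalue bound $\mathcal{P}_{\Omega}-\mathcal{P}_{\Omega}\mathcal{P}_{\Phi}\mathcal{P}_{\Omega}\succeq \delta(1-\varepsilon)\mathcal{P}_{\Omega}$, which gives a $(\delta(1-\varepsilon))^{-1/2}$ factor where the paper's rearrangement of $||\mathcal{P}_{\Omega}H||_{F}\leq ||\mathcal{P}_{\Omega}\mathcal{P}_{\Phi}H||_{F}+||\mathcal{P}_{\Omega}\mathcal{P}_{\Phi^{\perp}}H||_{F}$ gives a factor of order $\delta^{-1}$; both are harmless once multiplied by the $\lambda\delta/16$ coefficient from the third condition in \eqref{eq12}.
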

\begin{proof}
Since $H_{S}+S_{0}=Z=Z_{0}-H$, $||L_{0}+H||_{*}+\lambda ||Z_{0}-H||_{1}=||L_{0}+H||_{*}+\lambda ||S_{0}+H_{S}||_{1}$. Let $X_{L}$ and $X_{S}$ be the subgradients of $||\cdot||_{*}$ and $||\cdot||_{1}$ at $L_{0}$ and $S_{0}$. We have the identities: $X_{L}=UV^{*}+W+\mathcal{P}_{\Phi^{\perp}}(X_{L}-UV^{*}-W)$, and $\lambda X_{S}=UV^{*}+W-\mathcal{P}_{\Omega}(UV^{*}+W-\lambda	E)+\mathcal{P}_{\Omega^{\perp}}(\lambda X_{S}-UV^{*}-W)$.

By the definition of subgradient, we have
\begin{align}
||L_{0}+H||_{*}+\lambda ||S_{0}+H_{S}||_{1}\geq& ||L_{0}||_{*}+\lambda ||S_{0}||_{1}+\langle X_{L},H\rangle+\lambda\langle X_{S},H_{S}\rangle\notag\\
=& ||L_{0}||_{*}+\lambda ||S_{0}||_{1}+\langle UV^{*}+W,H\rangle+\langle\mathcal{P}_{\Phi^{\perp}}(X_{L}-UV^{*}-W),H\rangle\notag\\
&+\langle UV^{*}+W-\mathcal{P}_{\Omega}(UV^{*}+W-\lambda	E),H_{S}\rangle\notag\\
&+\langle \mathcal{P}_{\Omega^{\perp}}(\lambda X_{S}-UV^{*}-W),H_{S}\rangle\notag\\
=&||L_{0}||_{*}+\lambda ||S_{0}||_{1}+\langle UV^{*}+W,D_{0}\rangle+\langle X_{L}-UV^{*}-W,\mathcal{P}_{\Phi^{\perp}}(H)\rangle\notag\\
&-\langle \mathcal{P}_{\Omega}(UV^{*}+W-\lambda	E),H_{S}\rangle+\langle \lambda X_{S}-UV^{*}-W,\mathcal{P}_{\Omega^{\perp}}(H_{S})\rangle\notag\\
\geq&||L_{0}||_{*}+\lambda ||S_{0}||_{1}+\langle UV^{*}+W,D_{0}\rangle+\langle X_{L}-UV^{*}-W,\mathcal{P}_{\Phi^{\perp}}(H)\rangle\notag\\
&+\langle \lambda X_{S}-UV^{*}-W,\mathcal{P}_{\Omega^{\perp}}(H_{S})\rangle-\frac{1}{16}\lambda \delta 	||\mathcal{P}_{\Omega} H_{S}||_{F}\label{14}
\end{align}
where the first inequality follows from the definition of subgradient. The following equality is obtained by substituting the two identities of $X_{L}$ and $X_{S}$ into the right hand side. The second equality uses the relationship $H=D_{0}-H_{S}$, as well as self-adjointness of $\mathcal{P}_{\Omega}$. The last inequality follows from \eqref{eq12}.

We now manipulate the last four terms on the right hand side of the last inequality.

For the first term, since $D_{0}=\mathcal{P}_{\Omega^{\perp}}D_{0}$ by construction,
\begin{equation}\label{15}
\langle UV^{*}+W,D_{0}\rangle=\langle UV^{*}+W,\mathcal{P}_{\Omega^{\perp}}D_{0}\rangle
\geq -||\mathcal{P}_{\Omega^{\perp}}(UV^{*}+W)||_{\infty}||D_{0}||_{1}
\geq -\frac{\lambda}{2}||D_{0}||_{1}
\end{equation}

For the second term, by duality, let $X_{L}$ be such that $\langle X_{L}, \mathcal{P}_{\Phi^{\perp}}(H)\rangle =||\mathcal{P}_{\Phi^{\perp}}(H)||_{*}$. Also, note that $|\langle UV^{*}+W,\mathcal{P}_{\Phi^{\perp}}(H)\rangle|\leq ||\mathcal{P}_{\Phi^{\perp}}(UV^{*}+W)||\cdot||\mathcal{P}_{\Phi^{\perp}}(H)||_{*}\leq \frac{1}{2}||\mathcal{P}_{\Phi^{\perp}}(H)||_{*}$ since $\mathcal{P}_{\Phi^{\perp}}(UV^{*})=0$ and $\mathcal{P}_{\Phi^{\perp}}W=W$. Therefore, we have
\begin{equation}\label{16}
\langle X_{L}-UV^{*}-W,\mathcal{P}_{\Phi^{\perp}}(H)\rangle\geq \frac{1}{2}||\mathcal{P}_{\Phi^{\perp}}(H)||_{*}
\end{equation}

For the third term, let $X_{S}=\sign(\mathcal{P}_{\Omega^{\perp}}H_{S})$ so $\langle X_{S}, \mathcal{P}_{\Omega^{\perp}}(H_{S})\rangle=||\mathcal{P}_{\Omega^{\perp}}(H_{S})||_{1}$. Then
\begin{equation}\label{17}
\langle \lambda X_{S}-UV^{*}-W,\mathcal{P}_{\Omega^{\perp}}(H_{S})\rangle\geq \big(\lambda-||\mathcal{P}_{\Omega^{\perp}}(UV^{*}+W)||_{\infty}\big)||\mathcal{P}_{\Omega^{\perp}}(H_{S})||_{1}\geq \frac{\lambda}{2}||\mathcal{P}_{\Omega^{\perp}}(H_{S})||_{1}
\end{equation}

Finally for the last term, since $H_{S}=D_{0}-H$ and $\mathcal{P}_{\Omega}D_{0}=0$, we have $||\mathcal{P}_{\Omega}H_{S}||_{F}=||\mathcal{P}_{\Omega}H||_{F}$ and
\begin{align*}
||\mathcal{P}_{\Omega}H||_{F}\leq& ||\mathcal{P}_{\Omega}\mathcal{P}_{\Phi}H||_{F}+||\mathcal{P}_{\Omega}\mathcal{P}_{\Phi^{\perp}}H||_{F}\\
\leq & \sqrt{1-\delta+\varepsilon \delta}||H||_{F}+||\mathcal{P}_{\Phi^{\perp}}H||_{F}\\
\leq&  \sqrt{1-\delta+\varepsilon \delta}(||\mathcal{P}_{\Omega}H||_{F}+||\mathcal{P}_{\Omega^{\perp}}H||_{F})+||\mathcal{P}_{\Phi^{\perp}}H||_{F}\\
\leq & \sqrt{1-\delta+\varepsilon \delta}(||\mathcal{P}_{\Omega}H||_{F}+||\mathcal{P}_{\Omega^{\perp}}H_{S}||_{1}+||\mathcal{P}_{\Omega^{\perp}}D_{0}||_{1})+||\mathcal{P}_{\Phi^{\perp}}H||_{*}
\end{align*}
where the last inequality follows from the fact that $||\cdot||_{F}\leq ||\cdot||_{1}$ and $||\cdot||_{F}\leq ||\cdot||_{*}$.
By rearranging the terms and using $D_{0}=\mathcal{P}_{\Omega^{\perp}}D_{0}$ again,
\begin{equation}\label{eq18}
||\mathcal{P}_{\Omega}H||_{F}\leq \frac{ \sqrt{1-\delta+\varepsilon\delta}}{1- \sqrt{1-\delta+\varepsilon\delta}}(||\mathcal{P}_{\Omega^{\perp}}H_{S}||_{1}+||D_{0}||_{1})+\frac{1}{1- \sqrt{1-\delta+\varepsilon\delta}}||\mathcal{P}_{\Phi^{\perp}}H||_{*}
\end{equation}

Combining the last inequality in \eqref{14} with \eqref{15}, \eqref{16}, \eqref{17} and \eqref{eq18}, we have
\begin{align*}
||L_{0}+H||_{*}+\lambda||Z_{0}-H||_{1}\geq& ||L_{0}||_{*}+\lambda||S_{0}||_{1}+(\frac{1}{2}-\frac{\lambda\delta}{16(1-\sqrt{1-\delta+\varepsilon\delta})})||\mathcal{P}_{\Phi^{\perp}}H||_{*}\\
&+\lambda (\frac{1}{2}-\frac{ \delta\sqrt{1-\delta+\varepsilon\delta}}{16(1- \sqrt{1-\delta+\varepsilon\delta})})||\mathcal{P}_{\Omega^{\perp}}H_{S}||_{1}\\
&-\frac{\lambda }{2}||D_{0}||_{1}-\frac{ \lambda\delta\sqrt{1-\delta+\varepsilon\delta}}{16(1- \sqrt{1-\delta+\varepsilon\delta})} ||D_{0}||_{1}\\
\geq &||L_{0}||_{*}+\lambda||S_{0}||_{1}+\frac{1}{4}||\mathcal{P}_{\Phi^{\perp}}H||_{*}+\frac{\lambda}{4}||\mathcal{P}_{\Omega^{\perp}}H_{S}||_{1}-\frac{3}{4}\lambda ||D_{0}||_{1}
\end{align*}
where the second inequality holds because
\begin{equation}\label{19}
\frac{\delta}{1- \sqrt{1-\delta+\varepsilon\delta}}=\frac{\delta(1+ \sqrt{1-\delta+\varepsilon\delta})}{\delta(1-\varepsilon)}\leq 4
\end{equation}
and
\begin{equation}\label{eq19}
\frac{\delta\sqrt{1-\delta+\varepsilon\delta}}{1- \sqrt{1-\delta+\varepsilon\delta}}=\frac{\sqrt{1-\delta+\varepsilon\delta}(1+ \sqrt{1-\delta+\varepsilon\delta})}{1-\varepsilon}\leq 4
\end{equation}
for large enough $N$.

Then again, since $S_{0}=Z_{0}-D_{0}$ and $\mathcal{P}_{\Omega^{\perp}}H_{S}=\mathcal{P}_{\Omega^{\perp}}(D_{0}-H)=D_{0}-\mathcal{P}_{\Omega^{\perp}}H$,
\begin{equation*}
\begin{aligned}
&||L_{0}||_{*}+\lambda||S_{0}||_{1}+\frac{1}{4}||\mathcal{P}_{\Phi^{\perp}}H||_{*}+\frac{\lambda}{4}||\mathcal{P}_{\Omega^{\perp}}H_{S}||_{1}-\frac{3}{4}\lambda ||D_{0}||_{1}\\
\geq &||L_{0}||_{*}+\lambda||Z_{0}||_{1}+\frac{1}{4}||\mathcal{P}_{\Phi^{\perp}}H||_{*}+\frac{\lambda}{4}||\mathcal{P}_{\Omega^{\perp}}H||_{1}-2\lambda ||D_{0}||_{1}
\end{aligned}
\end{equation*}

Therefore, by \eqref{eq13},
\begin{align*}
||\mathcal{P}_{\Phi^{\perp}}H||_{*}+\lambda ||\mathcal{P}_{\Omega^{\perp}}H||_{1}\leq& 8\lambda ||D_{0}||_{1}\\
\implies \frac{1}{\lambda}||\mathcal{P}_{\Phi^{\perp}}H||_{*}+ ||\mathcal{P}_{\Omega^{\perp}}H||_{1}\leq&  8||D_{0}||_{1}
\end{align*}
\end{proof}
\begin{rem}\label{rem4}
Under the event that $||L_{0}||_{\infty}\leq \alpha$, $L_{0}$ is a feasible solution to the minimization problem in \eqref{eq2}, so $\hat{L}$ yields a smaller objective function than $L_{0}$ by definition. Then equation \eqref{eq13} is satisfied so the bound given by Lemma \ref{lem4} applies to the estimation error $\hat{L}-L_{0}$.
\end{rem}
\begin{rem}
The basic idea of the proof is in line with Lemma 2.5 in Cand\`es \etal(2011) and Lemma 5 in Zhou \etal(2010). However there are major differences which lead to a tighter bound than the one obtained in Zhou \etal(2010) for SPCP. By construction, the argument in $||\cdot||_{*}$ and $||\cdot||_{1}$ in our objective function in \eqref{eq2} always add up to $Y$, while when we take the subderivative for $||\cdot||_{1}$, we take it at $S_{0}$ instead of $Z_{0}$. This enables us to utilize both the dual certificate as well as the nice relationship between $H_{S}$ and $H$ and the relationship between the support sets of $D_{0}$ and $S_{0}$, all generated by the Bernoulli device. These special properties in turn yield a tighter bound in terms of $||D_{0}||_{1}$.
\end{rem}

\section{Main Results}\label{sec5}
From Lemma \ref{lem4}, we obtain bounds for $||\mathcal{P}_{\Phi^{\perp}}H||_{*}$ and $||\mathcal{P}_{\Omega^{\perp}}H||_{1}$. The next lemma bounds the corresponding norms of their complements.
\begin{lem}\label{lem5}
Suppose $||\mathcal{P}_{\Omega}\mathcal{P}_{\Phi}||^{2}\leq 1-\delta+\varepsilon\delta$, $\delta\to 0$ and $\varepsilon\to 0$ as $N,T\to\infty$, then for large enough $N$ and $T$,
\begin{equation}\label{eq21}
||\mathcal{P}_{\Phi}H-\mathcal{P}_{\Omega}H||_{F}^{2}\geq \frac{\delta}{4}(||\mathcal{P}_{\Phi}H||_{F}^{2}+||\mathcal{P}_{\Omega}H||_{F}^{2})
\end{equation}
\end{lem}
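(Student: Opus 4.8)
The plan is to treat $\mathcal{P}_{\Phi}H$ and $\mathcal{P}_{\Omega}H$ as two vectors in the space of matrices equipped with the Frobenius inner product, and to control the ``angle'' between them purely through the operator-norm bound on $\mathcal{P}_{\Omega}\mathcal{P}_{\Phi}$. Write $A\equiv \mathcal{P}_{\Phi}H$ and $B\equiv\mathcal{P}_{\Omega}H$ and expand
\[
||A-B||_{F}^{2}=||A||_{F}^{2}+||B||_{F}^{2}-2\langle A,B\rangle,
\]
so that everything reduces to bounding the cross term $\langle A,B\rangle$ from above.

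The crux is to show $|\langle A,B\rangle|\leq \beta\,||A||_{F}||B||_{F}$, where $\beta\equiv ||\mathcal{P}_{\Omega}\mathcal{P}_{\Phi}||$. Using that $\mathcal{P}_{\Omega}$ is self-adjoint and idempotent, I would write $\langle A,B\rangle=\langle \mathcal{P}_{\Phi}H,\mathcal{P}_{\Omega}\mathcal{P}_{\Omega}H\rangle=\langle \mathcal{P}_{\Omega}\mathcal{P}_{\Phi}H,\mathcal{P}_{\Omega}H\rangle=\langle \mathcal{P}_{\Omega}A,B\rangle$, and apply Cauchy--Schwarz to get $|\langle A,B\rangle|\leq ||\mathcal{P}_{\Omega}A||_{F}\,||B||_{F}$. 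The one point demanding care is converting $||\mathcal{P}_{\Omega}A||_{F}$ into $\beta\,||A||_{F}$: since $A=\mathcal{P}_{\Phi}H$ already lies in $\Phi$, I have $A=\mathcal{P}_{\Phi}A$, hence $\mathcal{P}_{\Omega}A=\mathcal{P}_{\Omega}\mathcal{P}_{\Phi}A$ and therefore $||\mathcal{P}_{\Omega}A||_{F}\leq ||\mathcal{P}_{\Omega}\mathcal{P}_{\Phi}||\,||A||_{F}=\beta\,||A||_{F}$. It is essential to let the operator norm act on $A\in\Phi$ rather than naively on $H$, since the hypothesis $||\mathcal{P}_{\Omega}\mathcal{P}_{\Phi}||^{2}\leq 1-\delta+\varepsilon\delta$ only controls the composition applied to arguments in the range of $\mathcal{P}_{\Phi}$.

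With the cross-term estimate in hand, I would invoke the elementary bound $2||A||_{F}||B||_{F}\leq ||A||_{F}^{2}+||B||_{F}^{2}$ together with $\beta\geq 0$ to obtain
\[
||A-B||_{F}^{2}\geq ||A||_{F}^{2}+||B||_{F}^{2}-2\beta\,||A||_{F}||B||_{F}\geq (1-\beta)\big(||A||_{F}^{2}+||B||_{F}^{2}\big).
\]
It then remains to check $1-\beta\geq\delta/4$ for large $N,T$. From $\beta^{2}\leq 1-\delta+\varepsilon\delta$, rationalizing gives
\[
1-\beta\geq 1-\sqrt{1-\delta+\varepsilon\delta}=\frac{\delta(1-\varepsilon)}{1+\sqrt{1-\delta+\varepsilon\delta}},
\]
and since $\delta\to 0$ and $\varepsilon\to 0$, for large enough $N,T$ the factor $1-\varepsilon\geq 1/2$ while the denominator is at most $2$, so the right-hand side is at least $\delta/4$. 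Substituting back into the previous display yields \eqref{eq21}.

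No step here is genuinely deep; the main obstacle is the cross-term bound, where one must route the operator-norm estimate through the observation that $\mathcal{P}_{\Phi}H$ belongs to $\Phi$, rather than crudely bounding against $||H||_{F}$, which would be too lossy to produce the sharp factor $1-\beta$.
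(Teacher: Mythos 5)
Your proposal is correct and follows essentially the same route as the paper's own proof: expand the squared Frobenius norm, bound the cross term by $||\mathcal{P}_{\Omega}\mathcal{P}_{\Phi}||\cdot||\mathcal{P}_{\Phi}H||_{F}\cdot||\mathcal{P}_{\Omega}H||_{F}$ via self-adjointness and idempotence of the projections, apply the AM--GM step, and rationalize $1-\sqrt{1-\delta+\varepsilon\delta}$ to extract the $\delta/4$ factor for large $N,T$. No substantive differences.
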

\begin{proof}
\begin{align*}
||\mathcal{P}_{\Phi}H-\mathcal{P}_{\Omega}H||_{F}^{2}=&||\mathcal{P}_{\Phi}H||_{F}^{2}+||\mathcal{P}_{\Omega}H||_{F}^{2}-2\langle \mathcal{P}_{\Phi}H, \mathcal{P}_{\Omega}H\rangle\\
\geq & ||\mathcal{P}_{\Phi}H||_{F}^{2}+||\mathcal{P}_{\Omega}H||_{F}^{2}-2\sqrt{1-\delta+\varepsilon\delta}||\mathcal{P}_{\Phi}H||_{F}||\mathcal{P}_{\Omega}H||_{F}\\
\geq & (1-\sqrt{1-\delta+\varepsilon\delta})(||\mathcal{P}_{\Phi}H||_{F}^{2}+||\mathcal{P}_{\Omega}H||_{F}^{2})\\
= &\frac{\delta(1-\varepsilon)}{1+\sqrt{1-\delta+\varepsilon\delta}}(||\mathcal{P}_{\Phi}H||_{F}^{2}+||\mathcal{P}_{\Omega}H||_{F}^{2})
\end{align*}
where the first inequality follows from $|\langle \mathcal{P}_{\Phi}H, \mathcal{P}_{\Omega}H\rangle|=|\langle \mathcal{P}_{\Omega}H, \mathcal{P}_{\Omega}\mathcal{P}_{\Phi}\mathcal{P}_{\Phi}H\rangle|\leq ||\mathcal{P}_{\Omega}\mathcal{P}_{\Phi}||\cdot || \mathcal{P}_{\Omega}H||_{F}\cdot  ||\mathcal{P}_{\Phi}H||_{F}$.
Then for large enough $N$ and $T$, $1-\varepsilon>\frac{1}{2}$ and $1+\sqrt{1-\delta+\varepsilon\delta}<2$. This completes the proof.
\end{proof}
Given Lemmas \ref{lem4} and \ref{lem5}, now we are ready to state and prove the main result in the paper.
\begin{thm} \label{thm1}
Under Assumptions \ref{ass1}, \ref{ass2} and the conditions in Lemma \ref{lem2}, if $\alpha \mu r=o(N^{1/3})$, there exists $C>0$ such that
\begin{equation}\label{eq22}
P\big(\frac{1}{NT}||\hat{L}-L_{0}||_{F}^{2}\leq C\big(\alpha\lor (\mu^{8/3}r^{2})\big)\delta\big)\to 1
\end{equation}
where $\big(\alpha\lor (\mu^{8/3}r^{2})\big)\delta=\frac{(\alpha\mu r)\lor (\mu^{11/3}r^{3}) }{N^{1/3}}=o(1)$.
\end{thm}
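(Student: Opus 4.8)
The plan is to combine the two one-sided estimates of Lemma \ref{lem4} with the ``angle'' inequality of Lemma \ref{lem5}, convert everything into a bound on $\frac{1}{NT}||H||_F^2$ with $H=\hat L-L_0$, and then substitute the concentration bound $||D_0||_1\lesssim NT\delta^2$ and the rates of Lemma \ref{lem2}. First I would fix the high-probability event on which simultaneously: (i) a dual certificate satisfying \eqref{eq12} exists (Lemmas \ref{lem1}--\ref{lem2}); (ii) $||L_0||_\infty\le\alpha$, so $L_0$ is feasible for \eqref{eq2} and, by Remark \ref{rem4}, \eqref{eq13} holds for $\hat L$; and (iii) $||D_0||_1\le C NT\delta^2$, the Hoeffding bound recorded after the Bernoulli device (each $|D_{0,it}|<C'\delta$ and $D_{0,it}\ne0$ only on a $Ber(\delta)$ set, so $||D_0||_1$ concentrates near $NT\delta^2$). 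On this event Lemma \ref{lem4} splits into the two estimates $||\mathcal P_{\Phi^\perp}H||_*\le 8\lambda||D_0||_1$ and $||\mathcal P_{\Omega^\perp}H||_1\le 8||D_0||_1$.

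The decisive step, which I expect to be the crux, is to upgrade the $\ell_1$ control of $\mathcal P_{\Omega^\perp}H$ to Frobenius control; the naive $||\cdot||_F\le||\cdot||_1$ is hopeless here, since $\mathcal P_{\Omega^\perp}H$ carries on the order of $NT$ nonzero entries and would give a bound growing with $NT$. This is exactly where the box constraint of \eqref{eq2} earns its keep: because $\hat L$ and $L_0$ both obey $||\cdot||_\infty\le\alpha$, we have $||H||_\infty\le 2\alpha$, and therefore
\begin{equation*}
||\mathcal P_{\Omega^\perp}H||_F^2=\sum_{(i,t)\in\Omega^c}H_{it}^2\le ||H||_\infty\,||\mathcal P_{\Omega^\perp}H||_1\le 16\alpha||D_0||_1.
\end{equation*}
This trades a power of the $\ell_1$ bound for the factor $\alpha$, and is the source of the $\alpha$ in the theorem.

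Next I would extract the ``on-support'' piece from Lemma \ref{lem5}. Writing $\mathcal P_\Phi H-\mathcal P_\Omega H=\mathcal P_{\Omega^\perp}H-\mathcal P_{\Phi^\perp}H$ and applying the triangle inequality together with $||\mathcal P_{\Phi^\perp}H||_F\le||\mathcal P_{\Phi^\perp}H||_*\le 8\lambda||D_0||_1$ and the display above yields
\begin{equation*}
||\mathcal P_{\Omega}H||_F^2\le \frac{4}{\delta}\,||\mathcal P_{\Omega^\perp}H-\mathcal P_{\Phi^\perp}H||_F^2\le \frac{C}{\delta}\big(\alpha||D_0||_1+\lambda^2||D_0||_1^2\big).
\end{equation*}
Since $\mathcal P_\Omega$ and $\mathcal P_{\Omega^\perp}$ are orthogonal, $||H||_F^2=||\mathcal P_\Omega H||_F^2+||\mathcal P_{\Omega^\perp}H||_F^2$, so the two displays combine to $||H||_F^2\le \frac{C}{\delta}\big(\alpha||D_0||_1+\lambda^2||D_0||_1^2\big)$.

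Finally I would insert $||D_0||_1\le C NT\delta^2$ and divide by $NT$, giving $\frac{1}{NT}||H||_F^2\le C\big(\alpha\delta+\lambda^2 NT\delta^3\big)$. Substituting the rates of Lemma \ref{lem2} ($N\asymp T$, $\lambda\asymp\mu^{1/3}N^{-2/3}$, $\delta\asymp\mu r N^{-1/3}$) turns the second term into $\lambda^2NT\delta^3\asymp\mu^{11/3}r^3N^{-1/3}=\mu^{8/3}r^2\delta$, whence $\frac{1}{NT}||H||_F^2\le C(\alpha\lor\mu^{8/3}r^2)\delta$. The extra hypothesis $\alpha\mu r=o(N^{1/3})$ forces $\alpha\delta=o(1)$, while $\mu^{11/3}r^3/N^{1/3}=o(1)$ from Lemma \ref{lem2} controls the other term, so the bound is $o(1)$; intersecting the finitely many high-probability events gives the stated probability tending to $1$.
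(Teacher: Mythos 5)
Your proposal is correct and follows essentially the same route as the paper's proof: the same high-probability event, the same use of the box constraint to convert $||\mathcal{P}_{\Omega^{\perp}}H||_{1}\leq 8||D_{0}||_{1}$ into $||\mathcal{P}_{\Omega^{\perp}}H||_{F}^{2}\leq 16\alpha||D_{0}||_{1}$, the same identity $\mathcal{P}_{\Phi}H-\mathcal{P}_{\Omega}H=\mathcal{P}_{\Omega^{\perp}}H-\mathcal{P}_{\Phi^{\perp}}H$ fed into Lemma \ref{lem5}, and the same Hoeffding bound $||D_{0}||_{1}\leq CNT\delta^{2}$ with the rate substitution. The only (immaterial) difference is that you assemble $||H||_{F}^{2}$ from the single orthogonal decomposition $||\mathcal{P}_{\Omega}H||_{F}^{2}+||\mathcal{P}_{\Omega^{\perp}}H||_{F}^{2}$, whereas the paper averages the $\Omega$- and $\Phi$-decompositions.
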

\begin{proof}
Under the event
\begin{equation*}
\mathcal{E}\equiv\{||\mathcal{P}_{\Omega}\mathcal{P}_{\Phi}||^{2}\leq 1-\delta+\varepsilon\delta\}\cap \{W\ \textrm{satisfies}\ \eqref{eq12}\} \cap \{\textrm{conditions}\ \eqref{eq8}\ and\ \eqref{eq9}\ \textrm{holds}\},
\end{equation*}
let $H=\hat{L}-L_{0}$ and by Lemma \ref{lem4},
\begin{equation}\label{25}
||\mathcal{P}_{\Phi^{\perp}}H||_{F}^{2}\leq 64\lambda^{2}||D_{0}||_{1}^{2}
\end{equation}
\begin{equation}\label{26}
||\mathcal{P}_{\Omega^{\perp}}H||_{F}^{2}\leq 16\alpha||D_{0}||_{1}
\end{equation}
where \eqref{25} follows  from $||\mathcal{P}_{\Phi^{\perp}}H||_{F}\leq ||\mathcal{P}_{\Phi^{\perp}}H||_{*}$ and \eqref{26} follows from $||\mathcal{P}_{\Omega^{\perp}}H||_{F}^{2}\leq ||\mathcal{P}_{\Omega^{\perp}}H||_{\infty}\cdot ||\mathcal{P}_{\Omega^{\perp}}H||_{1}$. Notice under $\mathcal{E}$, $||L_{0}||_{\infty}\leq \alpha$, so $||\mathcal{P}_{\Omega^\perp}H||_{\infty}\leq ||H||_{\infty}\leq ||L_{0}||_{\infty}+||\hat{L}||_{\infty}\leq 2\alpha$.

Since $H-H=\mathcal{P}_{\Phi}H+\mathcal{P}_{\Phi^{\perp}}H-\mathcal{P}_{\Omega}H-\mathcal{P}_{\Omega^{\perp}}H=0$, we have $||\mathcal{P}_{\Phi}H-\mathcal{P}_{\Omega}H||_{F}=||\mathcal{P}_{\Phi^{\perp}}H-\mathcal{P}_{\Omega^{\perp}}H||_{F}$. By Lemma \ref{lem5},
\begin{align*}
\frac{\delta}{4}(||\mathcal{P}_{\Phi}H||_{F}^{2}+||\mathcal{P}_{\Omega}H||_{F}^{2})\leq& ||\mathcal{P}_{\Phi}H-\mathcal{P}_{\Omega}H||_{F}^{2}\\
=&||\mathcal{P}_{\Phi^{\perp}}H-\mathcal{P}_{\Omega^{\perp}}H||_{F}^{2}\\
\leq & 4\big(||\mathcal{P}_{\Phi^{\perp}}H||_{F}^{2}\lor ||\mathcal{P}_{\Omega^{\perp}}H||_{F}^{2}\big)
\end{align*}
Therefore,
\begin{align}
||H||_{F}^{2}=&\frac{1}{2}\big(||\mathcal{P}_{\Omega^{\perp}}H||_{F}^{2}+||\mathcal{P}_{\Omega}H||_{F}^{2}+||\mathcal{P}_{\Phi^{\perp}}H||_{F}^{2}+||\mathcal{P}_{\Phi}H||_{F}^{2}\big)\\
\leq &\frac{C_{1}}{\delta}(\alpha||D_{0}||_{1}\lor \lambda^{2}||D_{0}||_{1}^{2})\label{28}
\end{align}

Next we derive the order of $||D_{0}||_{1}$. Note $|D_{0,it}|\leq C'\delta$ by construction. Therefore,
\begin{equation*}
P(||D_{0}||_{1}-C'NT\delta^{2} \geq C't\delta)\leq P(\delta\sum_{it}M_{it}-NT\delta^{2}\geq t\delta)
=  P(\sum_{it}M_{it}-NT\delta\geq t)
\leq  \exp\big(-\frac{2t^{2}}{NT}\big)
\end{equation*}
where $M_{it}$ is defined in \eqref{eq5} and the last inequality follows from Hoeffding's inequality. Let $t=N\log N$, then $\exp(-\frac{2t^{2}}{NT})\to 0$. Meanwhile, since $NT\delta\asymp N^{5/3}\mu r$, it dominates $t$. Therefore, $||D_{0}||_{1}\leq C_{2}NT\delta^{2}$ with probability approaching 1.

Now conditional on the event $\bar{\mathcal{E}}\equiv\mathcal{E}\cap \{||D_{0}||_{1}\leq C_{2}NT\delta^{2}\}$,  \eqref{28} implies
\begin{align*}
||H||_{F}^{2}\leq& \frac{C}{\delta}\big((\alpha NT\delta^{2})\lor (\lambda^{2}(NT)^{2}\delta^{4})\big)\\
\leq & C\delta \big((\alpha NT)\lor (\lambda^{2}(NT)^{2}\delta^{2})\big)\\
\implies \frac{1}{NT}||H||_{F}^{2}\leq &C\big(\alpha\lor (\mu^{8/3}r^{2})\big)\delta
\end{align*}
This completes the proof as $\bar{\mathcal{E}}$ occurs with probability approaching 1.
\end{proof}

\begin{rem}
Theorem \ref{thm1} indicates that the rate of convergence is determined by $\delta$, $r$, $\mu$ and $\alpha$. As the driving force of consistency, the faster $\delta$ converges to $0$, the faster the convergence rate of $\frac{1}{NT}||\hat{L}-L_{0}||_{F}^{2}$ will be. However, the rate of $\delta$ is bounded because the fraction of $0$ entries in $S_{0}$ cannot be too small for the existence of a dual certificate. The parameters $r$, $\mu$ and $\alpha$ slow the convergence down, but if we look at the relative bias, $\frac{||\hat{L}-L_{0}||_{F}^{2}}{||L_{0}||_{F}^{2}}$, the effects of these parameters may be mitigated or reversed. For example suppose $L_{0}$ follows the random orthogonal model introduced in Section 3 and all the $r$ singular values are of same order. Then $\alpha\asymp \frac{\sigma_{1}\mu r}{\sqrt{NT}}$ where $\sigma_{1}$ is the largest singular value, so the error bound is proportional to $(\sigma_{1} \mu^{2} r^{2})\lor (\mu^{11/3}r^{3})$. Meanwhile, the order of $||L_{0}||_{F}^{2}$ is proportional to $\sigma_{1}^{2}\mu^{2}r^{2}$. Therefore, the upper bound of the relative bias is proportional to $\frac{1}{\sigma_{1}}\lor\frac{\mu^{5/3}r}{\sigma^{2}_{1}}$, and we can see the effect of $\mu$ and $r$ are smaller and the larger the singular values are, which, for fixed $\mu$ and $r$, implies a larger $\alpha$, the smaller the relative bias is. This is because $L_{0}$ with larger singular values tend to dominate $Z_{0}$ in the decomposition, and thus recovering it from the observed data matrix $Y$ is relatively easier.
\end{rem}

\section{Simulations}\label{sec6}
In this section, we present two simulation experiments to illustrate the effectiveness of the BPCP estimator. In the first experiment, we generate random $L_{0}$ and $Z_{0}$ with $r=1,3,5$. In each case, we examine the performance of the estimator for both Gaussian and Cauchy error matrices by comparing $\frac{1}{NT}||\hat{L}-L_{0}||_{F}^{2}$ as well as the relative estimation error $||\hat{L}-L_{0}||_{F}^{2}/||L_{0}||_{F}^{2}$ in each case as we gradually increase the sample size. In the second experiment, we fix a picture as $L_{0}$, and superimpose it with Gaussian and Cauchy white noise. The purpose of this example is to visually show how the estimator performs when the error is continuously fat-tailed distributed.

To implement the estimator, note the minimization problem in \eqref{eq2} is equivalent as
\begin{equation}
\min_{L,Z}\ \ ||L||_{*}+\lambda||Z||_{1},\ s.t.\ L+Z=Y,\ \ ||L||_{\infty}\leq \alpha
\end{equation}
We first set $\alpha$ to be a large number and solve the problem without the inequality constraint, then verify whether the solution satisfies the inequality. For the first step, we adopt the Augmented Lagrangian Multiplier algorithm (ALM) studied in Lin \etal(2010) and Yuan and Yang (2013). The algorithm solves the following problem
\begin{equation}
\min_{L,Z}\ \ ||L||_{*}+\lambda||Z||_{1}+\langle \Lambda, Y-L-Z\rangle+\frac{\nu}{2}||Y-L-Z||_{F}^{2}
\end{equation}
where $\Lambda$ is the Lagrangian multiplier for the equality constraint and the last term is a penalty for deviating from the constraint. The algorithm solves the minimization problem iteratively and terminates if both $||Y-L_{k}-Z_{k}||_{F}$ and $\nu ||Z_{k}-Z_{k-1}||_{F}$ are small enough, where the subscript $k$ denotes the $k$th iteration. We set the stopping criteria to be $10^{-7}||Y||_{F}$ and $10^{-5}$ respectively. Following Yuan and Yang (2013) and Cand\`es \etal(2011), we set $\nu=\frac{NT}{4||Y||_{1}}$. Finally, $\lambda$ is set to be $0.7(\frac{\log(N\land T)}{NT})^{1/3}$ in the first experiment and $0.5(\frac{\log(N\land T)}{NT})^{1/3}$ in the second; both of them satisfy the conditions in the theory when $\mu$ is of the order $\log(N)$ and $N\asymp T$.

\subsection{Numerical Experiment}
In this experiment, we set $N=T=200,300,400,500$ and $r=1,3,5$. In each case, we draw an $N\times r$ and an $r\times T$ matrix from $N(0,1)$ and their product is $L_{0}$. For Gaussian error, entries of $Z_{0}$ are independently drawn from $N(0,1)$ while for Cauchy error, they are drawn from the standard Cauchy distribution.

Table \ref{tab1} shows the value $\frac{1}{NT}||\hat{L}-L_{0}||_{F}^{2}$ in each case. It decreases as $N$ and $T$ increase. Also, consistent with Theorem \ref{thm1}, fixing $N$ and $T$, the estimation error increases as the rank increases. However, the magnitude of the increase is smaller than indicated by Theorem \ref{thm1}, where the upper bound is proportional to $r^{3}$. This suggests our bound may be further improved and deriving the lower bound may also be useful. Another observation is that the estimation error under Cauchy errors is systematically bigger than under Gaussian errors. To explain this phenomenon, note that the standard Cauchy density is lower than the standard normal density around $0$, so for any given $\delta$ that is small enough, $-\underline{\gamma}_{it}$ and $\bar{\gamma}_{it}$ introduced in Section \ref{sec2} are bigger for the Cauchy error, resulting in bigger $||D_{0}||_{1}$. This difference is negligible when $N,T\to\infty$ but leads to finite sample differences in the error bound.
\begin{table}[H]
\centering
\caption{Consistency of $\frac{1}{NT}||\hat{L}-L_{0}||_{F}^{2}$}\label{tab1}
\begin{tabular}{m{3cm} m{1.6cm} m{1.6cm} m{1.6cm} m{1.6cm} m{1.6cm} m{1.6cm}}
\hline
\hline
 & \multicolumn{3}{c}{Gaussian Error}  & \multicolumn{3}{c}{Cauchy Error} \\
 \cline{2-7}
 & $r=1$ & $r=3$ & $r=5$ & $r=1$ & $r=3$ & $r=5$\\
 \hline
$N=T=200$ & $0.0461$ &$0.1349$& $0.2535$ & $0.0812$ & $0.3002$ &$0.5491$\\
$N=T=300$ & $0.0322$& $0.0943$&$0.1692$ &$0.0533$ &$0.1999$&$0.3676$\\
$N=T=400$ & $0.0247$ & $0. 0736$&$0.1308$ & $0.0438$ &$0.1496$&$0.2764$\\
$N=T=500$ & $0.0209$ &$0.0619$ &$0.1078$ &$0.0359$ &$0.1198$& $0.2151$\\
\hline
    \end{tabular}
    \end{table}

Table \ref{tab2} shows the relative estimation error $\frac{||\hat{L}-L_{0}||_{F}^{2}}{||L||_{F}^{2}}$. We can see that in every column, this quantity is also decreasing. Under the same error distribution, the differences across ranks are now significantly smaller. This is because as the rank increases, $||L_{0}||_{F}^{2}$ also increases by construction. Specifically, since $L_{0}$ is the product of two independent matrices with i.i.d. standard normal entries, $||L_{0}||_{F}^{2}$ is of the order of $NTr$, so dividing it mitigates the effect of $r$.

\begin{table}[H]
\centering
\caption{Relative Error $\frac{||\hat{L}-L_{0}||_{F}^{2}}{||L_{0}||_{F}^{2}}$}\label{tab2}
\begin{tabular}{m{3cm} m{1.6cm} m{1.6cm} m{1.6cm} m{1.6cm} m{1.6cm} m{1.6cm}}
\hline
\hline
 & \multicolumn{3}{c}{Gaussian Error}  & \multicolumn{3}{c}{Cauchy Error} \\
 \cline{2-7}
 & $r=1$ & $r=3$ & $r=5$ & $r=1$ & $r=3$ & $r=5$\\
 \hline
$N=T=200$ & $0.0378$ &$0.0411$& $0.0482$ & $0.0679$ & $0.1082$ &$0.1139$\\
$N=T=300$ & $0.0265$& $0.0303$&$0.0334$ &$0.0459$ &$0.0687$&$0.0729$\\
$N=T=400$ & $0.0224$ & $0. 0243$&$0.0265$ & $0.0383$ &$0.0498$&$0.0538$\\
$N=T=500$ & $0.0197$ &$0.0203$ &$0.0220$ &$0.0310$ &$0.0395$& $0.0416$\\
\hline
    \end{tabular}
    \end{table}

 \subsection{Graphical Experiment}

 In this experiment, we superimpose white noises drawn from the standard normal or the standard Cauchy distribution on a picture. The picture we use is directly downloaded from a built-in example in MATLAB (the file name is eight.tif). The picture has resolution $242\times 308$. We stack all columns into a long vector and duplicate it for 199 times, obtaining a $74536\times 200$ matrix $L_{0}$. By construction, $L_{0}$ has rank $1$ because all columns are equal. Then we draw a $74536\times 200$ matrix with i.i.d. entries from either of the two distributions as $Z_{0}$ and add it to $L_{0}$. This procedure simulates 200 frames from a video of the static picture interfered by white noise.

Figure \ref{fig1} and Figure \ref{fig2} display the results. In each figure, the northwest (NW) is the original picture. There are four coins in it, two heads and two tails. The southwest (SW) shows one of the 200 frames after the the picture is superimposed with Gaussian noises (Figure \ref{fig1}) or Cauchy noises (Figure \ref{fig2}). It can be seen that the details of the coins are no longer unrecognizable. The northeast (SE) quadrant shows the recovered picture. Differences in this picture between Figure \ref{fig1} and Figure \ref{fig2} are hardly to be seen, except the background in Figure \ref{fig2} is slightly darker. The southeast quadrant shows the same frame of residuals, in which we cannot see contours of the coins, indicating it contains very few information about the original picture.
 \begin{figure}[H]
\centering
\includegraphics[width=.45\textwidth]{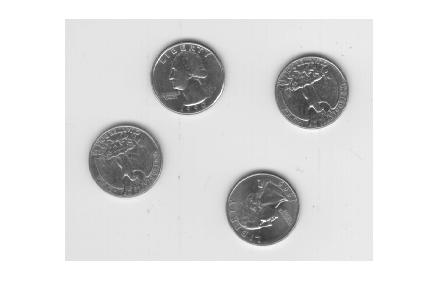}\quad
\includegraphics[width=.45\textwidth]{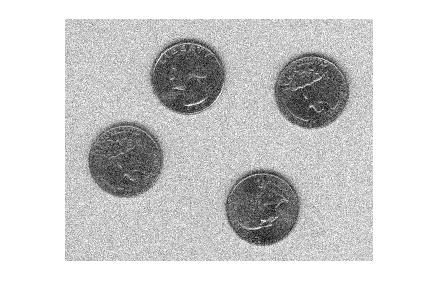}


\includegraphics[width=.45\textwidth]{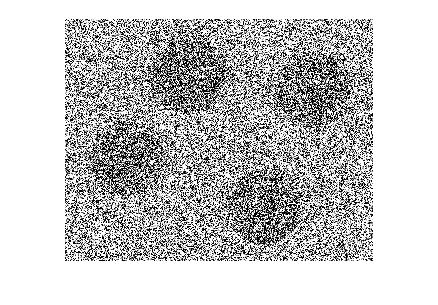}\quad
\includegraphics[width=.45\textwidth]{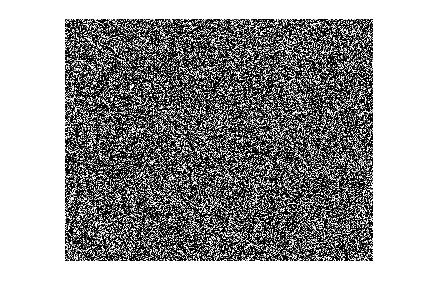}
\caption{Original (NW), Gaussian Noise (SW), Recovered Picture (NE) and Residual (SE)}
\label{fig1}
\end{figure}

 \begin{figure}[h]
\centering
\includegraphics[width=.45\textwidth]{Original.jpg}\quad
\includegraphics[width=.45\textwidth]{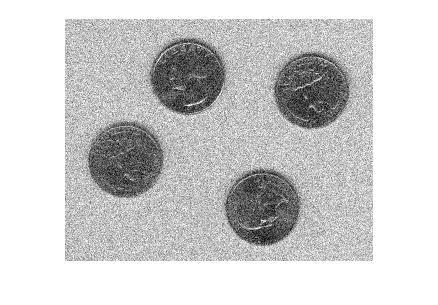}


\includegraphics[width=.45\textwidth]{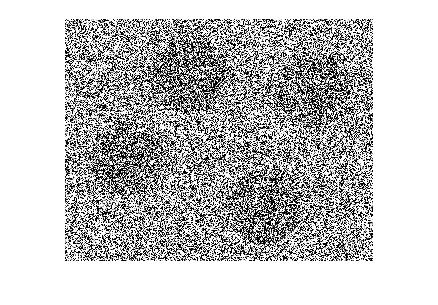}\quad
\includegraphics[width=.45\textwidth]{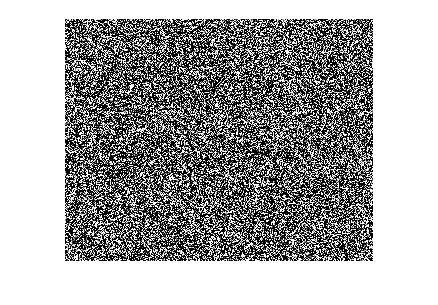}
\caption{Original (NW), Cauchy Noise (SW), Recovered Picture (NE) and Residual (SE)}
\label{fig2}
\end{figure}
\section{Conclusion}\label{sec7}
This paper provides sufficient conditions under which the Bounded Principal Component Pursuit (BPCP) is consistent for the random or deterministic low-rank component, whose entries can go to infinity with $N$ and $T$. Built on the prior work on Principal Component Pursuit (PCP), consistency is shown by constructing a Bernoulli device which induces sufficiently tight bounds for the estimation error whose order does not depend on the moments of the error $Z_{0}$; the random error only needs to have zero median, analogous to the LAD estimator for linear models. The results indicate the estimator is highly robust to large outliers.

There are a few aspects of the estimator that remains to be studied. This paper only shows consistency in the Frobenius norm. It would be interesting to see whether consistency holds componentwisely or in the $\ell_{\infty}$ norm. These aspects will be studied in future work.
\appendix
\section*{Appendix}
\subsection*{Proof of Lemma \ref{lem2}}
The proof of Lemma \ref{lem2} closely follows the proof of Lemmas 3 and 4 in Ganesh \etal(2010) and the proof of Lemmas 2.8 and 2.9 in Cand\`es \etal(2011). We record it here only to make it easy to see where modifications are made to accommodate a converging $\delta$ by construction. Throughout, we condition on the event that condition \eqref{eq8} in Assumption \ref{ass2} holds.

We begin by stating three lemmas in  Cand\`es and Recht (2009) and Cand\`es \etal(2011).
\begin{customlem}{A1}[Cand\`es and Recht (2009), Theoerm 4.1; Cand\`es \etal(2011), Theorem 2.6]\label{lemA3}
Suppose $\Omega_{0}\sim Ber(\delta_{0})$. Then with high probability,
\begin{equation*}
||\mathcal{P}_{\Phi}-\delta_{0}^{-1}\mathcal{P}_{\Phi}\mathcal{P}_{\Omega_{0}}\mathcal{P}_{\Phi}||\leq \varepsilon
\end{equation*}
provided that $\delta_{0}\geq C\varepsilon^{-2}(\mu r \log (N\lor T)/(N\land T)$.
\end{customlem}
\begin{customlem}{A2}[Cand\`es \etal(2011), Lemma 3.1]\label{lemA1}
 Suppose $X\in\Phi$ is a fixed matrix, and $\Omega_{0}\sim Ber(\delta_{0})$. Then, with high probability,
\begin{equation*}
||X-\delta_{0}^{-1}\mathcal{P}_{\Phi}\mathcal{P}_{\Omega_{0}}X||_{\infty}\leq \varepsilon ||X||_{\infty}
\end{equation*}
provided that $\delta_{0}\geq C\varepsilon^{-2}(\mu r \log (N\lor T)/(N\land T)$.
\end{customlem}
\begin{customlem}{A3}[Cand\`es and Recht (2009), Theorem 6.3; Cand\`es \etal(2011), Lemma 3.2]\label{lemA2}
 Suppose $X$ is fixed, and $\Omega_{0}\sim Ber(\delta_{0})$. Then, with high probability,
\begin{equation*}
||(I-\delta_{0}^{-1}\mathcal{P}_{\Omega_{0}})X||\leq C\sqrt{\frac{(N\lor T)\log(N\lor T)}{\delta_{0}}} ||X||_{\infty}
\end{equation*}
for some small numerical constant $C>0$ provided that $\delta_{0}\geq C(\mu r \log (N\lor T)/(N\land T)$.
\end{customlem}

Under the conditions in Lemma \ref{lem2}, $N\asymp T$, so in the subsequent proof, $N\lor T$ and $N\land T$ will simply be written as $N$. Following Cand\`es \etal(2011), let $X_{j}=UV^{*}-\mathcal{P}_{\Phi}Q_{j}$ where $Q_{j}$ is defined in \eqref{13}. Then although $U$ and $V$ can be random, the three lemmas can be first proved conditional on $U$ and $V$ and since the bounds for $U$ and $V$ in Assumption \ref{ass2} are uniform, the conclusions hold unconditionally.

\subsubsection*{Proof of a)}
From \eqref{13} and the expression of $X_{j}$, $Q_{j}=Q_{j-1}+q^{-1}\mathcal{P}_{\Omega_{j}}X_{j-1},\forall j\geq 1$. Therefore,
 $Q_{j_{0}}=\sum_{j=1}q^{-1}\mathcal{P}_{\Omega_{j}}X_{j-1}$ with $Q_{0}=0$. Then
\begin{align*}
||W_{L}||=||\mathcal{P}_{\Phi^{\perp}}Q_{j_{0}}||&\leq \sum_{j}||q^{-1}\mathcal{P}_{\Phi^{\perp}}\mathcal{P}_{\Omega_{j}}X_{j-1}||\\
&=\sum_{j=1}||\mathcal{P}_{\Phi^{\perp}}(q^{-1}\mathcal{P}_{\Omega_{j}}X_{j-1}-X_{j-1})||\\
&\leq \sum_{j=1}||q^{-1}\mathcal{P}_{\Omega_{j}}X_{j-1}-X_{j-1}||\\
&\leq C\sqrt{\frac{N\log N}{q}} \sum_{j=1}||X_{j-1}||_{\infty}\\
&\leq C'\sqrt{\frac{N\log N}{q}} ||UV^{*}||_{\infty}
\end{align*}
The second line follows from $\mathcal{P}_{\Phi^{\perp}}X_{j-1}=0$. The second inequality follows from $||\mathcal{P}_{\Phi^{\perp}}||\leq 1$. The third inequality follows from Lemma \ref{lemA2} by setting $\delta_{0}=q$. Similarly, the last inequality holds by noticing $||X_{j}||_{\infty}=||(\mathcal{P}_{\Phi}-q^{-1}\mathcal{P}_{\Phi}\mathcal{P}_{\Omega_{j}}\mathcal{P}_{\Phi})X_{j-1}||_{\infty}\leq ||X_{j-1}||_{\infty}$ by Lemma \ref{lemA1}, and thus $||X_{j}||_{\infty}\leq \varepsilon^{j}||UV^{*}||_{\infty}$. To see $q$ indeed satisfies the conditions in Lemma \ref{lemA1} and Lemma \ref{lemA2}, note that since $j_{0}=4$, $(1-q)^{4}=1-\delta$, so
\begin{equation*}
q=\frac{\delta}{(1+(1-\delta)^{\frac{1}{2}})(1+(1-\delta)^{\frac{1}{4}})}\in [ \frac{\delta}{4},\delta]
\end{equation*}
Therefore, $q$ and $\delta$ are of the same order, and thus under the conditions in Lemma \ref{lem2}, the conditions in Lemma \ref{lemA1} and Lemma \ref{lemA2} are satisfied. Then the right-hand side of the last inequality is of the order $\frac{\log N \sqrt{\mu r}}{N^{1/3}}\to 0$.

\subsubsection*{Proof of b)}
The right-hand side $\frac{\lambda\delta}{16}$ is of the order of $\frac{\mu^{4/3}r}{N}$.

For the left-hand side, since $\Omega^{c}=\cup_{j=1}^{j_{0}}\Omega_{j}$, $||\mathcal{P}_{\Omega}Q_{j_{0}}||=0$. Then
\begin{equation}\tag{A1}\label{eqA1}
\mathcal{P}_{\Omega}(UV^{*}+W_{L})=\mathcal{P}_{\Omega}(UV^{*}+\mathcal{P}_{\Phi^{\perp}}Q_{j_{0}})=\mathcal{P}_{\Omega}(UV^{*}-\mathcal{P}_{\Phi}Q_{j_{0}})=\mathcal{P}_{\Omega}(X_{j_{0}})
\end{equation}
Recall $X_{j}=(\mathcal{P}_{\Phi}-q^{-1}\mathcal{P}_{\Phi}\mathcal{P}_{\Omega_{j}}\mathcal{P}_{\Phi})X_{j-1}$. So by Lemma \ref{lemA3}, $||X_{j}||_{F}\leq \varepsilon||X_{j-1}||_{F}$, where $\varepsilon$ is the same as the one in the condition in Lemma \ref{lem2} since the orders of $q$ and $\delta$ are equal. Therefore, $||X_{j_{0}}||_{F}\leq \varepsilon^{j_{0}}||UV^{*}||_{F}=\varepsilon^{4}\sqrt{r}=O(\frac{(\log N)^{4}r^{1/2}}{N^{4/3}})$ which is smaller than $\frac{\mu^{4/3}r}{N}$ with large enough $N$.

\subsubsection*{Proof of c)}
Since $UV^{*}+W_{L}=X_{j_{0}}+Q_{j_{0}}$ and $Q_{j_{0}}$ is supported on $\Omega^{c}$,
\begin{align*}
||\mathcal{P}_{\Omega^{\perp}}(UV^{*}+W_{L})||_{\infty}=&||\mathcal{P}_{\Omega^{\perp}}(X_{j_{0}}+Q_{j_{0}})||_{\infty}\\
=&||\mathcal{P}_{\Omega^{\perp}}X_{j_{0}}+Q_{j_{0}}||_{\infty}\\
\leq& ||X_{j_{0}}||_{F}+||Q_{j_{0}}||_{\infty}
\end{align*}
From b) we already have $||X_{j_{0}}||_{F}\leq \lambda/8$ for large enough $N$ because $\lambda\asymp \frac{\mu^{1/3}}{N^{2/3}}$, and thus,
\begin{align*}
||Q_{j_{0}}||_{\infty}&= ||\sum_{j=1} q^{-1}\mathcal{P}_{\Omega_{j}}X_{j-1}||_{\infty}\\
&\leq q^{-1}\sum_{j=1}||X_{j-1}||_{\infty}\\
&\leq Cq^{-1}||UV||_{\infty}\\
&=O(\frac{1}{N^{2/3}})
\end{align*}
where the second inequality follows from $||X_{j}||_{\infty}=\varepsilon^{j}||UV^{*}||_{\infty}$ which we derived in the proof of a). It is clear that the right hand side of the inequality is smaller than $\lambda/8$ for large enough $N$.

\subsubsection*{Proof of d)}
 By construction of the Bernoulli device, for any element $E_{it}$, $P(E_{it}=0)=\delta$ and $P(E_{it}=1)=P(E_{it}=-1)=\frac{1-\delta}{2}$. Also, conditional on $\Omega$, the signs of $E$ are i.i.d. symmetric. By the definition of $W_{S}$,
\begin{equation*}
W_{S}=\lambda\mathcal{P}_{\Phi ^{\perp}}E+\lambda\mathcal{P}_{\Phi^{\perp}}\sum_{k\geq 1}(\mathcal{P}_{\Omega}\mathcal{P}_{\Phi}\mathcal{P}_{\Omega})^{k}E
\end{equation*}

For the first term, $||\lambda\mathcal{P}_{\Phi ^{\perp}}E||\leq \lambda||E||\leq \lambda C\sqrt{N}$ with large probability because $E$ has i.i.d. and mean $0$ entries (see Cand\`es \etal(2011) and the reference therein).

For the second term, denote $\mathcal{R}=\mathcal{P}_{\Phi^{\perp}}\sum_{k\geq 1}(\mathcal{P}_{\Omega}\mathcal{P}_{\Phi}\mathcal{P}_{\Omega})^{k}$. Then under the event $\{||\mathcal{P}_{\Omega}\mathcal{P}_{\Phi}||^{2}\leq 1-\delta+\varepsilon\delta\}\cap\{||\mathcal{P}_{\Omega^{\perp}}\mathcal{P}_{\Phi}||^{2}\leq \delta+\varepsilon(1-\delta)\}$ which occurs with high probability under Lemma \ref{lem1}, we have
\begin{align*}
||R||&=||\mathcal{P}_{\Phi^{\perp}}\sum_{k\geq 1}(\mathcal{P}_{\Omega}\mathcal{P}_{\Phi}\mathcal{P}_{\Omega})^{k}||\\
&\leq ||\mathcal{P}_{\Phi^{\perp}}\mathcal{P}_{\Omega}\mathcal{P}_{\Phi}\mathcal{P}_{\Omega}||\cdot \sum_{k\geq 0}||(\mathcal{P}_{\Omega}\mathcal{P}_{\Phi}\mathcal{P}_{\Omega})^{k}||\\
&\leq ||\mathcal{P}_{\Phi^{\perp}}\mathcal{P}_{\Omega}\mathcal{P}_{\Phi}||\cdot ||\mathcal{P}_{\Phi}\mathcal{P}_{\Omega}||\cdot \sum_{k\geq 0}||(\mathcal{P}_{\Omega}\mathcal{P}_{\Phi})||^{2k}\\
&=||\mathcal{P}_{\Phi^{\perp}}\mathcal{P}_{\Omega^{\perp}}\mathcal{P}_{\Phi}||\cdot ||\mathcal{P}_{\Phi}\mathcal{P}_{\Omega}||\cdot \sum_{k\geq 0}||(\mathcal{P}_{\Omega}\mathcal{P}_{\Phi})||^{2k}\\
&\leq \frac{\sqrt{(\delta+\varepsilon(1-\delta))(1-\delta+\varepsilon\delta)}}{\delta(1-\varepsilon)}\\
&\leq \frac{2}{\sqrt{\delta}}
\end{align*}
where the last inequality is from the fact that for large enough $N$, $\varepsilon<\delta$.

For any $\tau\in (0,1)$, let $N_{\tau}$ and $T_{\tau}$ denote an $\tau$-net for $\mathbb{S}^{N-1}$ and $\mathbb{S}^{T-1}$, which are $N-1$ and $T-1$ dimensional unit sphere respectively. The sizes of the spheres are at most $(3/\tau)^{N}$ and $(3/\tau)^{T}$ (see Ganesh \etal(2010)). Then
\begin{equation*}
||\mathcal{R}E||=\sup_{x\in\mathbb{S}^{N-1},y\in\mathbb{S}^{T-1}}\langle y,(\mathcal{R}E)'x\rangle\leq (1-\tau)^{-2}\sup_{x\in N_{\tau},y\in T_{\tau}}\langle y,(\mathcal{R}E)'x\rangle
\end{equation*}
For a fixed pair $(x,y)\in N_{\tau}\times T_{\tau}$, define $X(x,y)\equiv \langle y,(\mathcal{R}E)'x\rangle=\langle \mathcal{R}xy',E\rangle$. Then conditional on $\Omega$, $U$ and $V$, by Hoeffding's inequality,
\begin{align*}
P\big(|X(x,y)|>t\big|\Omega, U,V\big)\leq & 2\exp\big(-\frac{2t^{2}}{||\mathcal{R}(xy')||_{F}^{2}}\big)\\
\leq & 2\exp\big(-\frac{2t^{2}}{||\mathcal{R}||_{F}^{2}}\big)\\
\leq & 2\exp\big(-\frac{\delta t^{2}}{2}\big)
\end{align*}
Then using the union bound,
\begin{align*}
P\big(\sup_{x\in N_{\tau},y\in T_{\tau}}|X(x,y)|>t\big|\Omega, U,V\big)\leq & 2(\frac{3}{\tau})^{N+T}\exp(-\frac{\delta t^{2}}{2})
\end{align*}
Therefore,
\begin{align*}
P\big(||\mathcal{R}E||>C\sqrt{\frac{N}{\delta}}\big|\Omega, U,V\big)\leq & 2(\frac{3}{\tau})^{N+T}\exp(-\frac{\delta C^{2}(1-\tau)^{4}N}{2\delta})\\
=& 2(\frac{3}{\tau})^{N+T}\exp(-\frac{ C^{2}(1-\tau)^{4}N}{2})
\end{align*}
Since $N\asymp T$, for any $\tau$, there exists a finite $C$ such that $\big(\frac{3}{\tau}\big)^{2}<\exp(\frac{C^{2}(1-\tau)^{4}}{2})$. Also as the probability bound is not a function of $\Omega$, $U$ or $V$, the inequality holds unconditionally. Hence with high probability,
\begin{equation*}
||W_{S}||\leq C\lambda\sqrt{N}(1+\frac{1}{\sqrt{\delta}})\leq C'\mu^{-1/6}r^{-1/2}\to 0
\end{equation*}

\subsubsection*{Proof of e)}
By construction $\mathcal{P}_{\Omega^{\perp}}E=0$ and $\mathcal{P}_{\Omega^{\perp}}(\mathcal{P}_{\Omega}\mathcal{P}_{\Phi}\mathcal{P}_{\Omega})^{k}E=0$ for all $k$, so we have
\begin{align*}
\mathcal{P}_{\Omega^{\perp}}W_{S}=&\lambda\mathcal{P}_{\Omega^{\perp}}\mathcal{P}_{\Phi^{\perp}}\sum_{k\geq 0}(\mathcal{P}_{\Omega}\mathcal{P}_{\Phi}\mathcal{P}_{\Omega})^{k}E\\
=&-\lambda\mathcal{P}_{\Omega^{\perp}}\mathcal{P}_{\Phi}\sum_{k\geq 0}(\mathcal{P}_{\Omega}\mathcal{P}_{\Phi}\mathcal{P}_{\Omega})^{k}E\\
=&-\lambda\mathcal{P}_{\Omega^{\perp}}\mathcal{P}_{\Phi}(\mathcal{P}_{\Omega}-\mathcal{P}_{\Omega}\mathcal{P}_{\Phi}\mathcal{P}_{\Omega})^{-1}E
\end{align*}
Therefore, $||\mathcal{P}_{\Omega^{\perp}}W_{S}||_{\infty}\leq \lambda||\mathcal{P}_{\Phi}(\mathcal{P}_{\Omega}-\mathcal{P}_{\Omega}\mathcal{P}_{\Phi}\mathcal{P}_{\Omega})^{-1}E||_{\infty}$. Let $\tilde{W}_{S}\equiv \mathcal{P}_{\Phi}(\mathcal{P}_{\Omega}-\mathcal{P}_{\Omega}\mathcal{P}_{\Phi}\mathcal{P}_{\Omega})^{-1}E $. Then
$\tilde{W}_{S,it}=\langle e_{i}e_{t}', \tilde{W}_{S}\rangle$. Let $X(i,t)=(\mathcal{P}_{\Omega}-\mathcal{P}_{\Omega}\mathcal{P}_{\Phi}\mathcal{P}_{\Omega})^{-1}\mathcal{P}_{\Omega}\mathcal{P}_{\Phi}(e_{i}e_{t}')$, then $\langle e_{i}e_{t}', \tilde{W}_{S}\rangle=\langle X(i,t), E\rangle$ by noting that $\mathcal{P}_{\Omega}(\mathcal{P}_{\Omega}-\mathcal{P}_{\Omega}\mathcal{P}_{\Phi}\mathcal{P}_{\Omega})^{-1}=(\mathcal{P}_{\Omega}-\mathcal{P}_{\Omega}\mathcal{P}_{\Phi}\mathcal{P}_{\Omega})^{-1}$. Therefore, by Hoeffding's inequality, conditional on $\Omega$, $U$ and $V$,
\begin{align*}
P\big(|\tilde{W}_{S,it}|> t\big|\Omega, U,V\big)\leq 2\exp(-\frac{2t^{2}}{||X(i,t)||_{F}^{2}})
\end{align*}
and by using the union bound,
\begin{align*}
P\big(||\tilde{W}_{S}||_{\infty}>t\big|\Omega, U,V\big)\leq 2NT\exp(-\frac{2t^{2}}{||X(i,t)||_{F}^{2}})
\end{align*}
Under the conditions in Lemma \ref{lem1} and the incoherence conditions,
\begin{equation*}
||\mathcal{P}_{\Omega}\mathcal{P}_{\Phi}(e_{i}e_{t}')||_{F}\leq ||\mathcal{P}_{\Omega}\mathcal{P}_{\Phi}||\cdot ||\mathcal{P}_{\Phi}(e_{i}e_{t}')||_{F}\leq \sqrt{1-\delta+\varepsilon \delta} \sqrt{C\mu r/N}
\end{equation*}
Meanwhile, from the proof of d), we know $||(\mathcal{P}_{\Omega}-\mathcal{P}_{\Omega}\mathcal{P}_{\Phi}\mathcal{P}_{\Omega})^{-1}||\leq \frac{1}{\delta(1-\varepsilon)}$. Therefore,
\begin{align*}
P\big(||W_{S}||_{\infty}>\lambda t|\Omega, U,V\big)\leq &P\big(||\tilde{W}_{S,it}||_{\infty}> t\big|\Omega, U,V\big)\\
\leq &2NT\exp\big(-\frac{C't^{2}N\delta^{2}(1-\varepsilon)^{2}}{\mu r(1-\delta+\varepsilon\delta)}\big)\\
\to& 0,\forall t>0
\end{align*}
Since the right-hand side does not depend on $\Omega$, $U$ and $V$, the inequality holds unconditionally, which completes the proof.
\nocite{*}

\bibliography{reference}
\end{document}